\theoremstyle{definition}
\newtheorem{definition}{Definition}
\theoremstyle{plain}
\newtheorem{lemma}{Lemma}
\declaretheorem[name=Theorem]{theorem}
\newtheorem{corollary}{Corollary}
\newtheorem{assumption}{Assumption}
\newcommand{\vc}{\textsf{Thunderdome}\xspace}
\begin{document}
\title{\Large \bf\vc: Timelock-Free Rationally-Secure Virtual Channels \\
}

\author{
    Zeta Avarikioti \\
    \textit{TU Wien \& Common Prefix} 
    \and
    Yuheng Wang \\
    \textit{TU Wien} 
    \and
    Yuyi Wang \\
    \textit{CRRC Zhuzhou Institute \& Tengen Intelligence Institute} 
}



\maketitle

\begin{abstract}

Payment channel networks (PCNs) offer a promising solution to address the limited transaction throughput of deployed blockchains. However, several attacks have recently been proposed that stress the vulnerability of PCNs to timelock and censoring attacks. To address such attacks, we introduce \vc, the first timelock-free PCN. Instead, \vc leverages the design rationale of virtual channels to extend a timelock-free payment channel primitive, thereby enabling multi-hop transactions without timelocks. 
Previous works either utilize timelocks or do not accommodate transactions between parties that do not share a channel.



At its core, \vc relies on a committee of \emph{non-trusted} watchtowers, known as wardens, who ensure that no honest party loses funds, even when offline, during the channel closure process. We introduce tailored incentive mechanisms to ensure that all participants follow the protocol's correct execution.  Besides a traditional security proof that assumes an honest majority of the committee, we conduct a formal game-theoretic analysis to demonstrate the security of \vc when all participants, including wardens, act rationally. We implement a proof of concept of \vc on Ethereum to validate its feasibility and evaluate its costs. Our evaluation shows that deploying \vc, including opening the underlying payment channel, costs approximately \$15 (0.0089 ETH), while the worst-case cost for closing a channel is about \$7 (0.004 ETH).
\end{abstract}

\section{Introduction}

Blockchains have introduced a groundbreaking decentralized financial paradigm through cryptocurrencies, eliminating the need for trusted intermediaries~\cite{zheng2018blockchain}. However, blockchains face major scalability limitations, primarily due to the requirement that each transaction be validated by all nodes in the network~\cite{croman2016scaling}. As a result, popular cryptocurrencies like Bitcoin~\cite{nakamoto2008bitcoin} and Ethereum suffer from relatively slower processing speeds and much lower transaction throughput compared to centralized payment systems like Visa, hindering their potential for mass adoption. While solutions such as more efficient consensus protocols~\cite{blackshear2023sui,spiegelman2022bullshark,danezis2022narwhal} can increase throughput, they typically require changes to existing protocols, often causing a hard fork. To overcome these scalability challenges, Payment Channel Networks (PCNs) have emerged as a promising alternative~\cite{gudgeon2020sok}.

A payment channel functions as a secure off-chain "joint account" between two parties, utilizing the blockchain only for the opening and closing phases. This approach substantially reduces the number of on-chain transactions, effectively addressing the scalability issues of the underlying blockchain. Payment channels involve three primary operations: \textit{Open}, \textit{Update}, and \textit{Close}. The channel is opened by locking a certain amount of coins on-chain, referred to as the channel balance. The parties can then transact off-chain by updating their channel balance through the exchange of digitally signed messages that specify the new distribution of coins. Finally, the channel can be closed, settling all off-chain transactions with the final agreed-upon state posted on-chain. To prevent fraud, payment channel protocols often include a dispute period with a \emph{timelock}, allowing a counterparty to penalize a potentially dishonest party. For example, in the Bitcoin Lightning Network~\cite{poon2016bitcoin}, a party can claim the entire channel balance if the other party posts an outdated update.

An overlay Payment Channel Network (PCN) operates as a Layer 2 solution on top of a single blockchain (Layer 1), primarily functioning off-chain to address the scalability limitations of the base network. Notably, a PCN allows parties with at least one direct payment channel to make payments across the network to other parties, even if they do not share a direct channel. To facilitate multi-hop payments from sender to receiver, a path of sufficiently funded channels is identified. All channels along this path are then updated using lock contracts with a \emph{timelock} to ensure the atomic execution of the payment. Techniques such as Hash Timelock Contracts (HTLCs), used in the Bitcoin Lightning Network~\cite{poon2016bitcoin}, and methods like adaptor signatures~\cite{tairi2021post,aumayr2021generalized} or Verifiable Timed Signatures (VTS)~\cite{thyagarajan2020verifiable}, are commonly employed in PCNs. These techniques rely on timelocks to ensure that parties can recover their coins in the event of a payment failure. Thus, \emph{timelocks are essential for the security of both payment channel primitives and multi-hop payment protocols in PCNs.}

Recent research reveals that timelocks can introduce new attack vectors, notably censorship attacks such as timelock bribing. These attacks show that PCNs relying on timelock contracts for multi-hop payments might allow an adversary to steal the entire channel balance by delaying the on-chain inclusion of a specific transaction~\cite{nadahalli2021timelocked, tsabary2021mad, chung2022rapidash, garewal2020helium}. Similar vulnerabilities are present in the payment channel primitives, as noted in~\cite{avarikioti2019brick, avarikioti2022suborn}, these issues arise from the design rationale of timelock-based channel protocols: Lighting~\cite{poon2016bitcoin} and Blitz use a "revoke until time $T$, else execute" logic, while DMC~\cite{decker2015fast} employs a "revoke within time $T$, else execute" logic. In these cases, censoring an honest party's transaction prevents them from revoking misbehavior. Conversely, Thora~\cite{aumayr2022thora} uses an "execute until time $T$, else revoke" logic, and Perun~\cite{dziembowski2019perun} follows a "reply within time $T$, else get punished" logic. In these designs, censoring an honest party's transaction blocks them from executing the committed transactions, resulting in a loss of funds. These censorship attacks, while difficult to trace as excluded transactions are not reported on-chain, can be practically executed using TxWithhold Smart Contracts, as highlighted by BitMEX Research~\cite{bitmex}. Additionally, the cost of such attacks can be significantly reduced since multiple contracts can be targeted simultaneously (e.g., as many as fit in a block)~\cite{avarikioti2024bribe, tsabary2021mad}.

A common approach to lowering the success probability of censorship attacks is to use longer timelocks~\cite{nadahalli2021timelocked}. However, this strategy worsens another significant drawback of timelocks, known as the \emph{griefing attack}~\cite{mazumdar2023strategic}. In such attacks, the receiver initiates a payment path within a PCN but then aborts the payment, causing intermediaries to incur an \emph{opportunity cost} as they lock their assets for the timelock duration without receiving any routing fee. Given the inherent instability of real-world networks and the potential for network outages, relying solely on timelocks for secure and efficient PCN implementation proves to be a flawed approach. This raises a critical question: \emph{Can secure PCNs be designed without the reliance on timelocks?}

\subsection{Related work}

Various solutions have been proposed to counter the rising threat of censorship attacks on Payment Channel Networks (PCNs), many of which exploit miners' incentives, as miners often control censorship. Nadahalli et al.\cite{nadahalli2021timelocked} were among the first to identify the safety vulnerabilities of Hash Timelock Contracts (HTLCs) under censorship, analyzing parameters like timelock duration and transaction fees to determine when HTLCs remain secure. Building on this, Tsabary et al.\cite{tsabary2021mad} introduced MAD-HTLC, a modified structure that incentivizes miners to act as natural enforcers by penalizing malicious actors. However, the counter-bribing vulnerabilities in MAD-HTLC were later identified, leading to the development of He-HTLC~\cite{garewal2020helium} and Rapidash~\cite{chung2022rapidash}. Although these approaches provide secure channel primitives, they rely on specific assumptions, such as mining power distribution, and crucially require parties to remain online and responsive.

Another approach involves introducing third-party entities, known as watchtowers, to manage disputes on behalf of payment channel participants in a timely manner~\cite{mccorry2019pisa,avarikioti2020cerberus,lind2019teechain,avarikioti2018towards}. However, watchtowers are themselves susceptible to censorship attacks~\cite{avarikioti2019brick} unless they also operate as miners, ensuring independent inclusion of transactions. Thus, the security of this approach depends on additional assumptions, such as the watchtower’s mining power and ability to fulfill dual roles. Furthermore, while watchtowers help secure the payment channel primitive, they do not address the execution of multi-hop payments, leaving a gap in overall network protection.

Brick~\cite{avarikioti2019brick} introduced a pioneering payment channel primitive that eliminated the need for timelocks by internalizing dispute resolution within the channel itself. This was accomplished by establishing a committee of watchtowers, known as wardens, who were incentivized to act honestly through the collateral they had locked in the payment channel contract. These wardens were authorized to unilaterally close the Brick channel at the latest update state, removing the need for timelocks. However, extending Brick to support multi-hop payments introduced challenges, as current multi-hop execution mechanisms (e.g., HTLCs) still require timelocks, irrespective of the underlying channel primitive.

Recently, Ersoy et al.\ extended Brick to multi-hop scenarios by proposing a multi-hop payment protocol based on warden committees~\cite{ersoy2022syncpcn}. While promising, this protocol still has notable limitations. First, like traditional multi-hop payment protocols, Ersoy's approach requires the active participation of intermediary parties in each transaction, leading to higher costs and increased latency. Although the protocol claims to support virtual channels, a detailed design is not provided. Second, the protocol lacks a thorough game-theoretic analysis to formally evaluate the security implications of its incentive mechanisms.

Overall, while recent research has made significant progress in addressing timelock-based attacks in PCNs, current solutions remain constrained by specific assumptions (e.g., mining power distribution, intermediary participation) or inherent design limitations (e.g., limited applicability to multi-hop payments). These constraints underscore the need for an alternative approach that overcomes these challenges.

\subsection{Our contribution}
In this work, we introduce \vc, \emph{the first timelock-free virtual channel protocol}, providing an answer to the previously posed challenge. The core concept of \vc revolves around a committee of wardens appointed by the channel parties, who store the channel states and publish the most recent state on-chain when parties are unresponsive, whether due to being offline or censored. These wardens are incentivized with rewards or penalties based on their actions, ensuring the security of \vc. As a result, \vc effectively neutralizes all blockchain liveness attacks, including censorship attacks, within its payment channel network. To enable multi-hop payments, \vc leverages the design principles of virtual channels. First introduced by Dziembowski et al.~\cite{dziembowski2019perun}, virtual channels allow direct off-chain payments without requiring intermediary involvement in every transaction, as seen in multi-hop payment protocols like Blitz. Central to this concept is the mechanism that allows Alice and Bob, who do not share a direct payment channel, to collaborate with Ingrid, who does, to create a virtual channel. This setup enables Alice to transact directly with Bob, bypassing the need for Ingrid's participation. In essence, a virtual channel is a channel built atop two existing payment channels. \vc adopts a similar architectural design, constructed over two Brick channels, rather than relying on timelock-based payment channel primitives.

However, constructing a secure virtual channel atop two Brick channels presents several challenges due to Brick's design and inherent limitations. Specifically, a Brick channel relies on complex incentive mechanisms for both parties and the warden committee to maintain balance security (i.e., safety) and prevent hostage situations (i.e., liveness). These incentive mechanisms vary between Brick channels with different warden committees, making them not directly composable. As a result, the virtual channel cannot function as a standard payment channel without additional considerations.

Specifically, three major challenges arise when transitioning from Brick to \vc. First, Brick employs proofs-of-fraud that allow a party to claim the wardens' collateral. However, since the virtual channel comprises two separate Brick channels, a party in one channel cannot directly access the collateral locked in the other channel's smart contract. Furthermore, in a virtual channel, the intermediary party remains unaware of the virtual channel's state, as transactions are exclusively exchanged between the other two parties. This leaves the intermediary party vulnerable to potential fraud by the other participants or wardens during the channel closure. Therefore, it is uncertain whether the proof-of-fraud mechanism can still ensure the security of all virtual channel participants, including the intermediary. Second, since the virtual channel is built directly on payment channels without requiring additional on-chain deposits for its opening, ensuring compatibility with the underlying Brick payment channels presents a challenge. Third, although Brick is claimed to be secure in the rational setting, no formal framework or analysis has been provided. Thus, the third challenge is to develop a formal model that can prove \vc's security when all participants act rationally.


To address the first challenge, we modify the protocols involved in the update and close operations of \vc. We ensure the correctness of channel closure while allowing parties to penalize only those wardens with collateral in the relevant payment channel's smart contract. Additionally, we achieve closing security for the intermediary party by ensuring they either have \vc knowledge before closing the channel or the closing party is indistinguishable in terms of possessing such knowledge. For the second challenge, we draw inspiration from~\cite{aumayr2021bitcoin}, offloading the virtual channel to an on-chain payment channel only in pessimistic cases with the blockchain’s assistance. This allows us to distribute wardens' collateral according to each channel's balance, ensuring \vc remains securely compatible with the underlying payment channel. To address the third challenge, we model our protocol using Extensive Form Games (EFG), drawing on recent work that introduced the first game-theoretic analysis suitable for off-chain protocols like the Bitcoin Lightning Network~\cite{rain2022towards}. Specifically, we model \vc’s closing operation as an EFG and prove the game-theoretic security of the protocol by analyzing all possible strategies during the channel closure.

\noindent\textbf{Summary of Contribution.}
Our contributions can be summarized as follows:
\begin{itemize}
    \item We introduce \vc, the first timelock-free virtual channel protocol~\cite{dziembowski2019perun}. \vc builds upon the asynchronous payment channel primitive, Brick~\cite{avarikioti2019brick}, enabling secure payments between parties that do not share a direct channel, all without the need for timelocks (Section~\ref{sec:design}). We begin by presenting the design and security analysis of \vc with a single intermediary and then extend the protocol to support multi-hop payments involving more than two hops (Section~\ref{sec:multi}).
    
    \item We formalize the security properties of \vc and conduct a security analysis within the honest/Byzantine model (Section~\ref{sec:BAanalysis}). Specifically, we demonstrate that our protocol is secure under the assumption of distrusting participants, where $f$ out of $3f+1$ wardens in each payment channel are Byzantine, and the remainder are honest.
   
    \item We design the incentive mechanisms of \vc and introduce a formal game-theoretic model to represent the protocol (Section~\ref{sec:GTanalysis}). We prove that following the \vc closing protocol honestly constitutes a Subtree Perfect Nash Equilibrium (SPNE) strategy, ensuring the protocol’s game-theoretic security in the rational model. To our knowledge, this is the first off-chain protocol to include a formal game-theoretic analysis. 
   
    \item We evaluate the practicality of \vc by fully implementing its on-chain functions on the Ethereum blockchain using Solidity (Section~\ref{sec:eva}). We assess the gas costs for each procedure with 10 wardens in each committee. While the cost of opening \vc is zero, deploying and opening the underlying payment channel requires 444,4861 gas (approximately 14.83 USD). In the pessimistic scenario, closing \vc along with the underlying payment channel costs 2,079,766 gas (around 6.94 USD). We compare these costs with the timelock-based virtual channel protocol Perun: opening \vc's underlying channel costs 1.5 times more than Perun, while the pessimistic closing costs are 3 times higher. This shows that the additional cost of eliminating timelocks is not excessive. Additionally, we evaluate how gas fees scale as the number of wardens per committee increases from 10 to 25.
\end{itemize}

\section{Background}
\label{sec:background}
In this section, we provide a brief overview of the fundamental concepts of payment channels, with further details available in Gudgeon et al.\cite{gudgeon2020sok}. We then explore the design of the asynchronous Brick payment channel\cite{avarikioti2019brick}, which serves as the foundation of our protocol. Finally, we discuss payment channel networks and virtual channels.

\subsection{Payment channels}

A payment channel allows two users to exchange arbitrary transactions off-chain, while only a constant number of transactions are recorded on-chain, such as a worst-case scenario of three transactions in Lightning~\cite{poon2016bitcoin}. To initiate a payment channel, parties deposit coins on the blockchain, which can only be spent when specific conditions are met, such as requiring both parties' signatures (\textit{channel open}). Once the coins are locked on-chain, the parties can communicate off-chain and update the channel's balance or state (\textit{channel update}). To claim their funds on-chain, the parties can close the payment channel by publishing the most recent agreed-upon balance (\textit{channel close}). In the following, we discuss the key steps involved in these three operations in detail.

\textbf{Channel open.} Consider two parties, Alice and Bob, who wish to open a payment channel with initial deposits of $x_A$ and $x_B$ coins, respectively. For cryptocurrencies that support smart contracts, Alice and Bob can publish a \textit{payment channel smart contract} $C_L$ on the blockchain, containing a total balance of $x_A + x_B$ from both parties. These coins can only be spent with the signatures of both Alice ($\sigma_A$) and Bob ($\sigma_B$). Once $C_L$ is published on-chain, the payment channel is officially open.

\textbf{Channel update.} Suppose Alice wants to pay Bob an amount $a \leq x_A$. Alice generates a new payment channel state, $s_L$, signs it with her private key, and sends $\{s_L,\sigma_A(s_L)\}$ to Bob as an update request. Bob then verifies the request. If he agrees with the new state, he signs it as well and returns the final state, $\{s_L,\sigma_A(s_L),\sigma_B(s_L)\}$, to Alice. If Bob disagrees, he simply ignores the request. Once a valid channel state, signed by both parties, is generated (along with any other predefined protocol data exchange, such as revocation keys in Lightning), the new off-chain transaction between Alice and Bob is considered successfully completed.

\textbf{Channel close.} A payment channel can be closed either collaboratively or unilaterally by one party. In a collaborative closure, both parties publish a state with their signatures on-chain and distribute the channel balance accordingly. In a unilateral closure, one party publishes the most recent payment channel state signed by both, $\{s_{latest}, \sigma_A(s_{latest}), \sigma_B(s_{latest})\}$. After verifying the signatures, the smart contract $C_L$ closes the channel and distributes the coins according to the submitted state. The key challenge arises when a malicious party attempts to close the channel using an outdated state. Synchronous payment channel protocols typically address this by enforcing a timelock on the submitted state, which corresponds to the party posting it on-chain. This timelock, often combined with a secret exchanged during the channel update, enables the counterparty to punish a malicious party attempting to use an old state. Alternative techniques, such as verifiable timed signatures~\cite{sri2022vts}, can also be used, but they still rely on timing assumptions.

\subsection{Brick channel}
\label{sec:Brick}

The Brick channel, introduced in \cite{avarikioti2019brick}, is an asynchronous payment channel primitive. To address the challenges of operating without timelocks, Brick incorporates a committee of third-party entities known as wardens in the channel operations. In essence, wardens are responsible for verifying and storing commitments of channel state updates, which can then be used by a channel party to unilaterally close the Brick channel. Importantly, wardens are not fully trusted; rather, they are incentivized to follow the protocol honestly. Compared to its synchronous counterparts, Brick exhibits the following key differences:

\textbf{Channel structure.} In addition to the two primary parties, Alice and Bob, a Brick channel includes a committee of $3f+1$ wardens, where up to $f$ wardens can behave maliciously (Byzantine). The basic structure of the Brick channel is illustrated in Fig.~\ref{fig:brick}.

\begin{figure}[htbp]
    \centering
    \includegraphics[width=0.7\columnwidth]{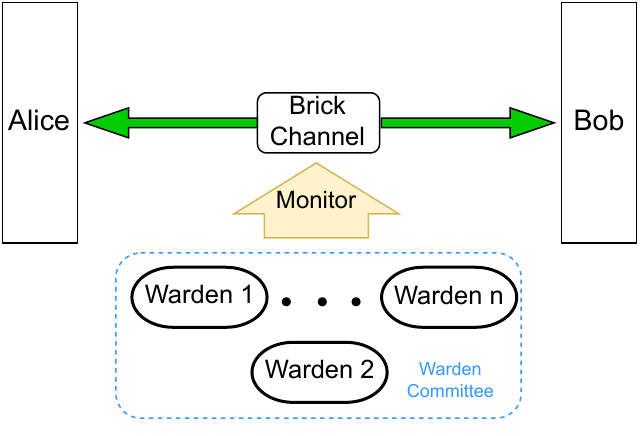}
    \caption{Brick payment channel}
    \label{fig:brick}
    \end{figure}

\textbf{Brick open.} Once Alice and Bob agree to open a Brick channel, they broadcast the channel information to the wardens. Each party can consider the channel open only after receiving at least $2f+1$ signed acknowledgments from the wardens -- a quorum certificate. The threshold of $2f+1$ ensures safety in asynchronous communication networks.

\textbf{Brick update.} Each state generated by Alice and Bob is hashed and signed by both parties. The state is then assigned a sequence number to indicate its order. This sequence number, along with the hashed state, is signed and broadcast to the wardens. Once at least $2f+1$ wardens have signed the sequence number, the state is considered valid, and the parties can execute it.

\textbf{Brick close.} The Brick channel closing protocol has two scenarios: \textit{optimistic} and \textit{pessimistic}. In the optimistic case, both parties are online and responsive, allowing them to collaboratively generate and sign a closing request, which is then published on-chain. In the pessimistic case, one party, say Alice, may be offline for an extended period and unresponsive to Bob's close request. To prevent the channel from being indefinitely locked, the Brick protocol allows Bob to close the channel unilaterally. However, Bob alone does not have a valid closing request, as during the channel update, the parties only exchange signatures on the hashed state, which is insufficient to close the channel. To close unilaterally, Bob initiates a closing request on-chain, prompting the wardens to publish the latest stored sequence number. The valid closing state is defined by the highest sequence number signed by Alice, Bob, and at least one warden. To incentivize honest behavior from the wardens, Brick employs a punishment mechanism based on proofs-of-fraud. A \emph{proof-of-fraud} consists of a warden’s signature on an update with a higher sequence number than the one they submitted for the closing request. Bob collects these signatures during the channel update process. If a valid \emph{proof-of-fraud} is provided, Bob can claim the corresponding warden's collateral on-chain. According to Brick's security analysis, if the aggregate collateral of wardens submitting outdated states exceeds the channel balance, Bob has no incentive to close the channel incorrectly. Instead, he would claim the wardens' collateral and award the entire channel balance to his counterparty. Specifically, for a Brick channel with $v$ coins as the balance, each of the $3f+1$ wardens must deposit at least $\frac{v}{f}$ coins on-chain to ensure security.

\subsection{Virtual channels}

 \textit{Virtual channels}\cite{dziembowski2019perun,aumayr2021bitcoin} offer a solution for executing transactions between parties who do not share a direct channel while minimizing the involvement of an intermediary (Ingrid). Specifically, Alice and Bob can establish a virtual channel on top of two existing payment channels, as shown in Fig~\ref{fig:virtualchannel}. Similar to payment channel protocols, virtual channels typically involve three operations: \textit{open}, \textit{update}, and \textit{close}. However, Ingrid only needs to participate in the \textit{open} and \textit{close} operations. For example, if Alice and Bob want to create a virtual channel with a balance of $x_A + x_B$ coins, not only do Alice and Bob contribute coins, but Ingrid also deposits $x_B$ and $x_A$ coins, respectively, to open the virtual channel, as shown in Fig~\ref{fig:virtualchannel}. Importantly, no additional coins are deposited on the blockchain for the virtual channel; instead, the virtual channel's funding is achieved by updating the payment channel state and utilizing a portion of the payment channel's funds. Unlike payment channels, opening a virtual channel generally occurs off-chain. After the virtual channel is opened, Alice and Bob can transact directly with each other during the \textit{update} operation without needing Ingrid’s involvement. When Alice and Bob decide to close the virtual channel, Ingrid participates in updating or closing the two underlying payment channels. The close operation only affects the underlying payment channels without directly impacting the blockchain. Accessing the blockchain is only required in pessimistic situations, such as when one party becomes unresponsive. Therefore, virtual channels can be seen as a "Layer 3" solution in blockchain architecture.

 \begin{figure}[htbp]
    \centering
    \includegraphics[width=\columnwidth]{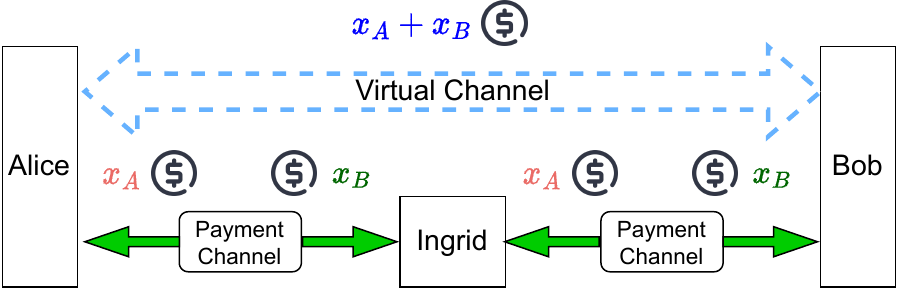}
    \caption{Virtual channel structure}
    \label{fig:virtualchannel}
    \end{figure}

\section{Model}
\label{sec:model}

In this section, we first present our system model and underlying assumptions. We then formally define two distinct threat models: one assumes Byzantine participants, while the other assumes participants are rational agents aiming to maximize their profit. Finally, we outline the desired security properties of \vc under both models.

\subsection{System model and assumptions} 
There are two types of participants in \vc: (1) the \textit{main parties}, which include Alice ($A$), Bob ($B$), and Ingrid ($I$) in the two-hop scenario; and (2) the \textit{wardens}, organized into warden committees $C_A$ and $C_B$. The three main parties maintain two underlying timelock-free payment channels (TPC), such as Brick. The Alice-Ingrid and Ingrid-Bob channels are each monitored by the wardens in $C_A$ and $C_B$, respectively. The general structure of \vc is illustrated in Fig.~\ref{fig:virtualbrick}.

We assume that all participants are computationally bounded and that cryptographically secure communication channels, signatures, hash functions, and encryption mechanisms are in place. The communication channels between \vc participants are considered asynchronous, meaning messages from honest parties may be delayed indefinitely but will eventually be delivered. This includes messages sent to the blockchain (miners), which can also experience arbitrary delays. We further assume that the underlying blockchain is safe and guarantees persistence~\cite{garay2024bitcoin}.

\subsection{Threat models}
We introduce two threat models: the conventional security model, which assumes the presence of Byzantine participants, and a game-theoretic model, which considers rational participants. A game-theoretic analysis is particularly important for off-chain protocols, as the standard Byzantine security model may fail to capture certain threats that arise from party collusion, such as the WormHole attack~\cite{malavolta2018anonymous}. Given the significant impact of Byzantine behavior on participant utilities and the lack-to the best of our knowledge—of a unified framework that addresses both models, it is crucial to assess the security of our protocol under both perspectives.

\textbf{Byzantine security model.} In this model, we assume each committee consists of $3f+1$ wardens, of which $f$ may be Byzantine, while the rest are honest and follow the protocol as specified. We note that the protocol's security can be extended to scenarios where the number of wardens differs as long as the Byzantine ratio remains no larger than $\frac{f}{3f+1}$ (i.e., the total number of wardens may vary depending on the setup, but the Byzantine proportion must be maintained). In addition to wardens, the security model also accounts for cases where the main \vc parties may be Byzantine, ensuring security for any honest participant. This is the weakest meaningful assumption, as collusion between all parties against a single honest party could occur; however, considering all parties deviating from the protocol holds no value since security properties apply only to honest participants. Byzantine nodes can arbitrarily deviate by delaying, crashing, or signing/publishing incorrect states or messages. However, they cannot drop honest messages or forge signatures, as these actions violate network and cryptographic assumptions.

\textbf{Game-theoretic model.} 
In this model, all participants are treated as rational, mutually distrusting agents. Rational agents will deviate from the protocol—such as double signing, publishing outdated states, or crashing—when doing so allows them to maximize their utility, i.e., gain more profit. Wardens are required to lock collateral for their participation in each channel. Overlapping wardens lock collateral in separate smart contracts for each channel, ensuring that punishments remain independent. Additionally, the participants' budgets are limited to their on-chain collateral, and external budgets are not considered in our analysis. This approach is consistent with most blockchain protocols, as proving economic security in the presence of infinite shorting markets (e.g., Bitcoin or Ethereum) is infeasible.

\subsection{Protocol Goals}

The two primary goals of our protocol are \emph{balance security} and \emph{liveness}. Balance security ensures the basic safety of channels, meaning that no honest participant loses coins during the execution of the virtual channel protocol. Honest participants are defined as those who adhere to the protocol specifications.

\begin{definition}[Balance security]
\label{def:security}
Any honest participant of \vc does not lose coins.    
\end{definition}

In addition to balance security, \vc must ensure that the protocol progresses meaningfully. For example, consider a channel protocol that only contains dummy functions that make no changes to the channel state. Although no participant would lose coins, such a protocol is useless, as no valid channel state could ultimately be committed on-chain. More broadly, hostage situations can still cause losses for participants, even if the channel funds remain intact. The liveness property addresses these issues, complementing balance security by ensuring meaningful progress.

\begin{definition}[Liveness]
\label{def:liveness}
Any valid operation (update or close) on the state of the virtual channel, involving at least one honest participant (Alice, Bob, or Ingrid), will eventually either be committed on-chain or invalidated.
\end{definition}

We note that the \emph{validity} of operations is determined by the protocol specification. For example, in our protocol, an operation is considered valid if it is agreed upon by the two main parties, involving at least one honest participant (Alice, Bob, or Ingrid). In contrast, in the Lightning Network, a valid update corresponds to the so-called commitment transaction~\cite{poon2016bitcoin}. A valid update is either the latest one, capable of being committed on-chain, or it is replaced by a newer valid update and thereby invalidated.

In our analysis, we demonstrate that \vc satisfies these properties in both the Byzantine and game-theoretic models. However, to prove these properties in the game-theoretic model, we employ different tools: we model the protocol as an Extensive Form Game (EFG) with \emph{perfect information} and show that the correct strategy profile forms a Subgame Perfect Nash Equilibrium (SPNE). Detailed definitions and the security properties encapsulating these concepts are provided in Appendix~\ref{sec:gamemodel}.

\section{Protocol Overview}
\label{sec:overview}

The core structure of \vc is illustrated in Fig.~\ref{fig:virtualbrick}. Despite its apparent simplicity, several challenges arise in securely designing the protocol: a) unlike the two-party TPC, \vc involves three main parties, introducing the possibility of collusion between two parties (and the wardens) to defraud the third party; b) the wardens in \vc can only be penalized in the TPC if they have deposited collateral, even though they may verify transactions that affect both underlying channels; and c) \vc transactions are only known to Alice and Bob, leaving Ingrid unaware of the most recent state of the virtual channel. In the following, we present \vc using a strawman approach.

 \begin{figure}[htbp]
    \centering
    \includegraphics[width=\columnwidth]{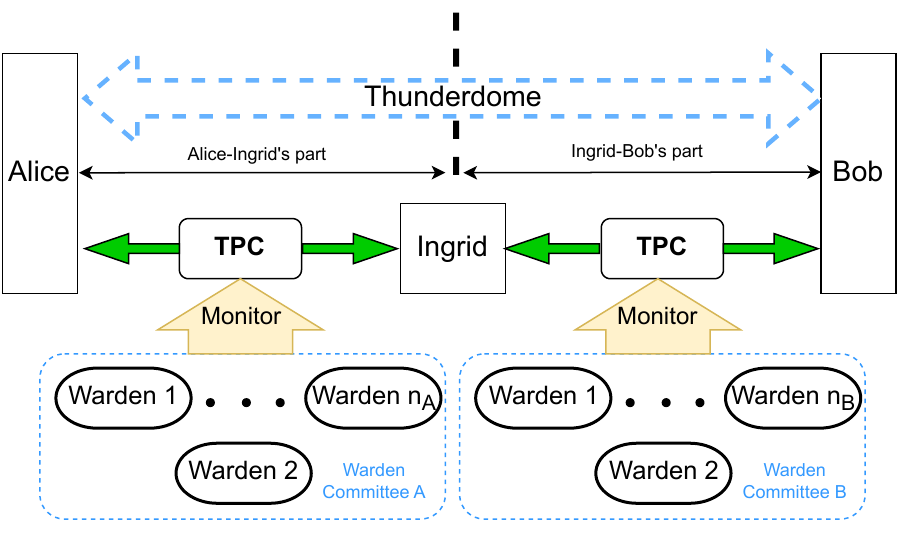}
    \caption{\vc structure}
    \label{fig:virtualbrick}
    \end{figure}

\textbf{Naive design.}
In a naive design of \vc, we follow the same rationale as regular TPC protocols like Brick. When opening the virtual channel, the wardens of each TPC lock a portion of their collateral into \vc. For example, if 2 out of 10 coins in a TPC are locked in the virtual channel, each warden locks 20\% of their collateral during the funding transaction. The update phase of \vc operates similarly to Brick. However, the closing phase differs during collaborative closing: \vc cannot be closed solely by Alice and Bob; Ingrid’s agreement is also required. For unilateral closing, the smart contract only requires a quorum certificate ($2f+1$ signatures) from the closing party’s warden committee (e.g., $C_A$ or $C_B$).

However, the security of this protocol can be easily compromised if two parties collude. Suppose Alice and Bob collude to cheat Ingrid. Since the unilateral closing only requires the quorum certificate from each warden committee independently, Alice and Bob can create and broadcast two different update states—one to each warden committee, $C_A$ and $C_B$. For instance, in one state, Alice holds all the coins, while in the other, Bob holds all the coins of the virtual channel. During the \vc unilateral closing procedure, neither Alice nor Bob will publish any proof-of-fraud, even if the wardens from different committees submit different states on-chain, as each warden only signs one state and publishes it honestly. Consequently, \vc will close with two different states in the underlying TPCs, causing Ingrid to lose coins. A straightforward solution would be to require the wardens to send their signatures to Ingrid as well. However, under asynchronous conditions, it becomes impossible to differentiate between a Byzantine warden who deliberately withholds the signature from Ingrid (but sends it to Alice) and an honest warden whose signature simply doesn't arrive in time.

\textbf{Smart contract cross-checking.}
To prevent attacks and protect Ingrid in \vc, we introduce a \emph{smart contract cross-checking} scheme during the unilateral closing procedure. Specifically, after a smart contract receives enough states from wardens that have not been proven fraudulent, it notifies the other contract of the latest state it has learned and queries the other contract’s latest state by sending a transaction. Both contracts then decide on the state with the higher sequence number, which will be used as the final closing state.

However, this approach remains vulnerable to certain safety attacks. For example, if Alice and Bob collude to generate two different states with the same sequence number, they can broadcast these states to their respective warden committees and initiate unilateral closing simultaneously. In the worst-case scenario, both smart contracts may send transactions to notify each other of the states they have received, and these transactions could be included in the same block, arriving simultaneously. If a naive scheme such as "only storing the state from the counterparty contract" is applied in this situation, the two parts of \vc could still be closed with different states, resulting in potential losses for Ingrid.

\textbf{Use leader contract to solve collision.}
To address the potential collision when transactions from two contracts arrive simultaneously, we designate one contract as the \emph{leader contract}. In the event that two different states have the same sequence number, both contracts will store only the state from the \emph{leader contract}. The leader contract must be agreed upon by all parties before the virtual channel is opened and should notify the corresponding smart contract as required.

Once we establish how smart contracts should handle states published by wardens, the next question is how many on-chain publications are necessary, given that asynchrony only guarantees eventual message delivery. If an updated state is confirmed with only $2f+1$ signatures from each warden committee, there could still be up to $f$ (honest) wardens who have not received the latest state, even after the protocol was closed (incorrectly). To mitigate this, we require the smart contract to wait for at least $f+1$ publishments from the corresponding committee before closing the channel. This ensures that at least one warden with locked collateral knows the latest state.

However, if we consider rational, profit-maximizing players, Bob could simply bribe that specific warden and close the channel using an outdated state. Since the collateral held by each warden in TPC is much smaller than the total channel value (which could be equal to the value locked in \vc), Bob could benefit by bribing the warden and closing the channel in a more favorable state. Thus, this protocol would not be secure in the rational model.

\textbf{Ensuring enough collateral.}
A straightforward solution to this bribing problem is to require each warden to lock collateral equal to the full channel value. However, this approach is impractical due to over-collateralization. Additionally, either the funding transactions of the underlying TPCs would need to be topped up to support future virtual channels, or additional on-chain transactions would be required when a virtual channel opens. These implications are undesirable, so we aim for a solution that does not require additional collateral.

The first step is determining the necessary amount of collateral. Let the balance of \vc be $v$, and assume the parties have locked $v_i$, $i \in \{A, I_A, I_B, B\}$ coins in the Alice-Ingrid and Ingrid-Bob payment channels, respectively. Since Ingrid participates in both channels, she locks coins in both, and the locked balances must satisfy the relationship: $v_A + v_B = v_{I_A} + v_{I_B} = v_A + v_{I_A} = v_{I_B} + v_B = v$. Any collusion set could potentially gain up to $v$ coins. On the one hand, a single Byzantine party could cheat their counterparty and gain at most the number of coins locked by the counterparty, which is no more than the \vc balance $v$. On the other hand, collusion between Alice and Bob could cheat Ingrid, who has locked $v_{I_A} + v_{I_B} = v$ coins, or Alice (or Bob) could collude with Ingrid to cheat Bob (or Alice), who has locked $v_A$ (or $v_B$) coins, which are also less than or equal to $v$. Therefore, to guarantee security, we must ensure that the total collateral of wardens that can be punished in the event of fraud is no less than \vc's balance $v$.

Thus, we require at least $2f+1$ wardens from each committee, $C_A$ and $C_B$, to publish their closing state in \vc's unilateral closing protocol. Combined with the quorum certificates required to update the \vc state, we ensure that at most $f$ wardens per committee can be slow. By waiting for $2f+1$ publications from accountable wardens, at least $f+1$ publications can be punished if they submit outdated information. Given that each warden locks $v/f$ coins in \vc, the total collateral that can be claimed per channel in case of fraud is $(f+1) \times \frac{v}{f} > v$.

\section{\vc Design}
\label{sec:design}
In this section, we present the detailed \vc protocol, which consists of three operations: \textit{Open}, \textit{Update}, and \textit{Close}. We assume the existence of two consecutive underlying TPCs: Alice-Ingrid's and Ingrid-Bob's. Each channel is managed by an on-chain smart contract, which handles the locked coins of both channel parties as well as the collateral of the wardens in the corresponding committee. The smart contract only has information about the specific payment channel it governs and has no knowledge of other payment channels or virtual channels.

\subsection{\vc Open}
\label{sec:open}
To initiate the opening of \vc, agreement among all three main parties is required. As evidence for the channel opening, our protocol mandates that all three parties collaboratively generate a unique transaction, referred to as the "register transaction," $TX_r$. This transaction includes the signatures of all three parties, serving as proof of their unanimous consent to open the virtual channel. Additionally, $TX_r$ contains essential information related to the virtual channel, such as its initial state, total channel balance, and \emph{contract information}, including the contract address and leader contract identifier. Under optimistic conditions, $TX_r$ remains off-chain and is not published on the blockchain. However, in pessimistic scenarios, such as when some parties go offline, the remaining online parties may involve the payment channel smart contract to help close \vc. In this case, $TX_r$ will be submitted to the smart contract and used to facilitate decisions during the \vc unilateral closing process when some parties are offline. Protocol~\ref{vopen} outlines the process of opening a \vc channel.

\begin{algorithm}[h]
    \caption{\vc Open}
    \begin{algorithmic}[1]
    \label{vopen}
        \REQUIRE Parties $A$, $I$ and $B$, TPC committees $C_A$ and $C_B$, initial virtual channel state $s_1$, total channel balance $v$ and contract information $ci$. 
        \ENSURE Virtual channel register transaction $TX_r$ and open a virtual channel
        
        \textcolor{magenta}{/*Combine wardens to form a virtual channel committee*/}
        \STATE $A$, $I$ and $B$ combine wardens from $C_A$ and $C_B$ to form virtual channel committee $C_V=C_A \cup C_B$.
        
        \textcolor{magenta}{/*Parties reach agreement on the open state*/}
        \STATE $A$ and $B$ generate two pre-register transaction with their signatures $TX_r'=\{pk_A, pk_B, pk_I,\{pk_{C_V}\}, s_1, v, ci, \sigma_A/\sigma_B\}$ and send to $I$.
        \STATE After receiving two transactions, $I$ responds with $TX_r''=\{pk_A, pk_B, pk_I,\{pk_{C_V}\}, s_1, v, ci,\sigma_A/\sigma_B, \sigma_I\}$.
        
        \STATE $A$ and $B$ exchange responses of $I$ and generate final register transaction with all partis' signatures $TX_r=\{pk_A, pk_B, pk_I,\{pk_{C_V}\}, s_1, v, ci, \sigma_A ,\sigma_B, \sigma_I\}$.
        \STATE All parties broadcast $TX_r$ to wardens in virtual channel committee $C_V$, and receive more than $t$ responses with wardens' signatures, then the virtual channel is considered to be opened.
        \STATE $A$ and $I$ ($I$ and $B$) update payment channels according to initial virtual channel state $s_1$.
    
    \end{algorithmic}
\end{algorithm}

First, all three parties—Alice, Ingrid, and Bob—must select a new committee for \vc based on the two existing payment channel committees, $C_A$ and $C_B$. A straightforward approach is to combine the two committees into one. While selecting a subset of wardens from each committee could reduce message complexity, it introduces additional security concerns, such as ensuring the safety of the selected virtual channel committee, which is beyond the scope of this paper. Therefore, to simplify the analysis, we assume the virtual channel committee $C_V$ is the combined set of the two payment channel committees, $C_A$ and $C_B$.

Once $C_V$ is finalized, the three main parties collaboratively generate the register transaction $TX_r$. To ensure that all parties eventually have the same $TX_r$, or none at all, Alice and Bob first generate and send two pre-register transactions with their signatures to Ingrid. Upon receipt, Ingrid verifies the transactions match, and sends the new pre-register transaction back to Alice and Bob. Alice and Bob then exchange response messages to collect each other’s signatures. Finally, all three parties obtain the complete $TX_r$, which includes the public keys of the wardens in $C_V$, the public keys of Alice, Bob, and Ingrid ($pk_A$, $pk_B$, and $pk_I$), the initial state $s_1$, and the signatures of the three parties ($\sigma_A$, $\sigma_B$, and $\sigma_I$). After generating the register transaction $TX_r$, the main parties broadcast it to all wardens in $C_V$ for verification.

The final step in opening \vc, after all parties reach an agreement, is to "virtually lock" the coins for the virtual channel. Unlike payment channels, which require publishing funding transactions on-chain, the parties of both underlying payment channels—Alice and Ingrid, and Ingrid and Bob—update their respective payment channels to lock coins for use in the virtual channel. For example, if the payment channel state between Alice and Ingrid is $(a, b)$, and Alice wants to open a virtual channel with Bob with an initial state of $(c, d)$, then Alice and Ingrid must update their payment channel state to $(a - c, b - d)$. This means Alice deposits $c$ coins and Ingrid deposits $d$ coins for the virtual channel. This update locks the coins ($c + d$) for exclusive use in the virtual channel, while the wardens' collateral is allocated based on the balance of both the payment channel and the virtual channel. The virtual channel is considered open once the payment channel has been updated accordingly. It's important to note that opening and closing are the only operations that require changes to the underlying payment channels. Updates to \vc and the payment channels can occur in parallel, as long as the coins locked for the virtual channel are not used during the payment channel update.

\subsection{\vc Update}
\label{sec:update}

The \textit{update} procedure of \vc is outlined in Protocol~\ref{alg:vupdate}. After Alice and Bob agree on the new \vc state $s_i$ and sequence number $i$, they generate an announcement containing their signatures, $M = \{s_i, i, \sigma_A(s_i, i), \sigma_B(s_i, i)\}$, and broadcast it to all wardens in the \vc committee $C_V$. To limit the influence of slow wardens in each committee, both parties must wait for the quorum certificate from each warden committee ($C_A$ and $C_B$) before proceeding with the next state update. Although Byzantine or rational wardens could deviate from this protocol by broadcasting different states to wardens or by not waiting for sufficient responses, our closing protocol, which will be introduced in the next section, ensures that these deviations will not compromise the protocol's security.

\begin{algorithm}[!h]
    \caption{\vc Update}
    \label{alg:vupdate}

    \textbf{Main Parties}
    \begin{algorithmic}[1]
        \REQUIRE Parties $A$ and $B$, warden committee $C_V$ comprising $C_A$ and $C_B$, current state $E_{P_I}(s_i)$.
        \ENSURE Update \vc to a new valid state.

        \STATE Both parties $A$ and $B$ sign and exchange the new virtual channel state: $M=\{s_i, i , \sigma_A(s_i,i), \sigma_B(s_i,i)\}$.
        
        \STATE $A$ and $B$  broadcast $M$ to committee $C_V$. \textcolor{magenta}{/*Along with a fee $r$*/}

        \STATE $A$ ($B$) waits for a quorum certificate ($2f+1$ signatures) from each warden committee, then $A$ ($B$) execute the virtual channel state $s_i$.

    \end{algorithmic}

    \textbf{Wardens}
    \begin{algorithmic}[1]
    
    \REQUIRE Parties $A$ and $B$, warden committee $C_V$ comprising $C_A$ and $C_B$, sequence number $i$.
    \ENSURE $C_V$ updates to a new valid state.
        
         \STATE Each warden $W_j$, upon receiving $M$, verifies that both parties’ signatures are present and the sequence number is exactly one higher than the previously stored sequence number. If the warden has published a state or has signed an $M$ or $M'$ with the same sequence number, it ignores the state update. Otherwise, $W_j$ stores $M$ (as a possible update, not yet replacing the $i-1$-th committed state), and sends its signature $\sigma_{W_j}(M)$ to both parties. \textcolor{magenta}{/*Only to parties that paid the fee*/}
     \end{algorithmic}

\end{algorithm}
 
\subsection{\vc Close}
\label{sec:close}

The final operation of \vc is the \textit{close}, which is initiated when at least one participant wishes to close the virtual channel. This can be optimistically executed by unlocking the previously virtually locked coins in the underlying TPCs through a simple channel update, provided that all parties are online and responsive. However, to ensure that coins are not locked indefinitely if some participants are unresponsive, \vc includes a mechanism that allows any party to unilaterally initiate the closing of the virtual channel on-chain, using the wardens. In the following, we outline the different subroutines for the \vc close operation and explain the rationale behind the design, starting from the most optimistic scenario.

\subsubsection{Optimistic situation}
In the optimistic scenario, all three parties—Alice, Bob, and Ingrid—are online and acting honestly. In this case, \vc can be closed by simply updating the underlying payment channels. Alice and Bob first send closing requests to Ingrid. Once Ingrid receives matching announcements from both parties, she verifies that the requests align. The parties then collaborate to update their payment channels according to the final virtual channel state.

\begin{algorithm}[!h]
    \caption{\vc Closing protocol}
    \label{alg:close-3online}
    \begin{algorithmic}[1]
        \REQUIRE Closing party $P\in\{A,I,B\}$ and payment channel counter party $Q\in\{A,I,B\}$, closing request sent by party $m\in\{P,Q\}$: $VS_m=\{s_{i_m}, i , \sigma_m(s_{i_m},i), \sigma_m(s_{i_m},i)\}$. 
        
        \ENSURE Close the virtual channel when all parties are online and responsive.
        
         \textcolor{magenta}{/*Parties send closing request*/}
        \STATE Closing party $P$ sends closing requests to others.
        
        \textcolor{magenta}{/*Counterparty is offline or sends incorrect request*/}
        
        \IF{$Q$ is offline or $VS_Q\neq VS_P$}
        \STATE $P$ runs Protocol~\ref{alg:uni-AB} separately to close the channel with $Q$

        \ELSE 
        \STATE $P$ update the payment channel with $Q$
        \ENDIF

        \STATE If both parts of \vc are closed, the virtual channel is successfully closed.
    \end{algorithmic}
\end{algorithm}

However, the optimistic closing procedure is insufficient on its own for several reasons. First, the closing requests from Alice and Bob may differ, leading to inconsistencies. Additionally, any of the three parties could be offline, preventing them from sending or responding to closing requests and resulting in a deadlock. To address these issues, we ensure the correct closing of \vc by allowing any online party to unilaterally close the channel. This protocol handles cases where counterparty agreement cannot be obtained due to an asynchronous network, offline status, or Byzantine behavior. To facilitate unilateral closure in \vc, we leverage the assistance of wardens, as described below. Crucially, the initiating party does not need to determine whether the counterparty is offline or if their message has not arrived; the mere absence of a response triggers the unilateral closing process.

\subsubsection{Pessimistic situation}

\textbf{Ingrid is offline or misbehaves.}
The first situation occurs when Ingrid is offline while Alice and Bob wish to close the virtual channel, or when Ingrid responds inconsistently to Alice's and Bob's honest closing requests. In this case, \vc allows Alice and Bob to close the channel unilaterally, requiring them to interact with the smart contract, effectively offloading the virtual channel closure to the underlying payment channel. Protocol~\ref{alg:uni-AB} outlines the steps from Alice's perspective.

The core idea of Protocol~\ref{alg:uni-AB} is for the warden committee $C_V$ to publish the latest channel state, enabling the closing party to punish malicious wardens by slashing their collateral in the smart contract. At the start of Protocol~\ref{alg:uni-AB}, the closing party submits the register transaction $TX_r$ to the smart contract, allowing the payment channel contract to verify published states and proofs-of-fraud using the public keys of the \vc parties and wardens from the respective committees included in $TX_r$. With $3f+1$ wardens in each committee, the closing party publishes proofs-of-fraud only after at least $2f+1$ wardens from their committee have published closing states, ensuring sufficient collateral for penalties.

As discussed in Section~\ref{sec:update}, it is possible for parties to deviate from the update protocol. As a result, when Alice and Bob execute Protocol~\ref{alg:uni-AB} simultaneously, the closing states published by the wardens could differ. To address this potential security issue, we require smart contracts to cross-check the states published by the wardens, as shown in Protocol~\ref{alg:crosschecking}. In this process, the two contracts exchange the latest state they have received and only retain the one with the higher sequence number. If two states have the same sequence number but different values, only the state from the \emph{leader contract}, which is pre-determined in the register transaction $TX_r$ during the channel opening, will be stored. Ultimately, \vc will be closed using the stored state after cross-checking, if there are not enough fraud proofs to invalidate it.

In \vc, parties can close the virtual channel unilaterally only by simultaneously closing the payment channel, as the latter becomes redundant if Ingrid does not respond to the \vc closing request. Furthermore, \vc enforces the virtual channel to be closed before the payment channel. 
To close both channels, parties first interact with the TPC's smart contract to verify the virtual channel's existence. The virtual channel is then closed through the combined warden committees, followed by the payment channel's closure. 
If a party attempts to close the payment channel without settling the virtual channel first, the \vc balance is forfeited to the payment channel counterparty as a penalty.

\begin{algorithm}[!h]
    \caption{Unilateral closing}
    \label{alg:uni-AB}
    \begin{algorithmic}[1]
        \REQUIRE Party $A$, wardens $W_1, \cdots, W_{n_A}$ from payment channel committee $C_A$, virtual channel state $VS$, virtual channel register transaction $TX_r$. 
        \ENSURE Close the payment channel and one part of the virtual channel.
        
        \textcolor{magenta}{/*Alice registers virtual channel*/}
        \STATE $A$ publishes the register transaction $TX_r$ on-chain and prove the validity of virtual channel.

        \textcolor{magenta}{/*Wardens publish virtual channel states */}
        \STATE Each warden $W_j$ publishes on-chain signature lists on the stored announcements for payment channel and virtual channel $\{VS,\sigma_{W_j}(VS)\}$.
        
        \textcolor{magenta}{/*Closing party submits proof-of-fraud*/}
        \STATE After verifying $2f+1$ on-chain signed announcements from committee $C_A$, $A$ publishes proofs-of-fraud.
        
        \textcolor{magenta}{/*Closing with punishment*/}
        \STATE After the state is included in a block, the smart contract verifies the signature of $VS$ based on the information in the register transaction $TX_r$.
        \STATE Then the smart contract verifies the proofs-of-fraud:
        \begin{enumerate}
            \item[(a)] If the valid proofs about a channel are $x\leq f$, the smart contract follows Protocol~\ref{alg:crosschecking} to decide the \emph{wardens' published state}, $WS_A$. Then the smart contract virtually distributes coins by updating the parties' current balances according to $WS_A$.
            \item[(b)] If the valid proofs about a channel are $x\geq f+1$, smart contracts award all channel balances to the counterparty.
        \end{enumerate}
        \STATE Smart contracts award all the cheating wardens' collateral to party $A$.
        \STATE First $2f+1$ wardens that have no cheating behaviors get an equal fraction of the closing fee.
        \STATE If the virtual channel is not closed before closing the payment channel, the whole channel balance will be given to the payment channel counterparty.
    \end{algorithmic}
\end{algorithm}

\begin{algorithm}[!h]
    \caption{Smart contract cross-checking}
    \label{alg:crosschecking}

    \textbf{Sending transaction}
    \begin{algorithmic}[1]
        \REQUIRE Alice-Ingrid smart contract $SC_A$, Ingrid-Bob smart contract $SC_B$, wardens' published state $WS_A$.
        \ENSURE Send notification and query transaction.

        \textcolor{magenta}{/*Decide state*/}
        \STATE $SC_A$ stores the state with the highest sequence number as $WS_A$. If there is more than one, $SC_A$ selects the first one it receives.

        \textcolor{magenta}{/*Exchange cross-checking transaction*/}
        \STATE $SC_A$ sends the notification and query transaction $TX_{qA}$ containning $WS_A$ to $SC_B$.  

        \STATE $SC_A$ waits for the transaction $TX_{qB}$ from $SC_B$. 
    \end{algorithmic}

    \textbf{Receiving transaction}
    \begin{algorithmic}[1]
    
    \REQUIRE Alice-Ingrid smart contract $SC_A$, Ingrid-Bob smart contract $SC_B$, wardens' published state $WS_A$.
    \ENSURE $SC_A$ decides on the closing state.
        
         \textcolor{magenta}{/*Receive transaction from the other contract*/}
         
         \STATE $SC_A$ receives transaction $TX_{qB}$ containing $WS_B$ from $SC_B$.

        \textcolor{magenta}{/*React when state is not decided yet*/}
         
         \STATE If $SC_A$ hasn't deceided $WS_A$ yet, then stores $WS_B$ as $WS_A$ and reply with $NULL$.

         \textcolor{magenta}{/*React when state is decided decided*/}

         \STATE If $WS_A$ has already been decided, then $SC_A$ reacts as follow:

         \begin{enumerate}
             \item[(a)] If $WS_A$ and $WS_B$ have different sequence number. Then $SC_A$ updates $WS_A$ to the one with the higher sequence number.

             \item[(b)] If $WS_A$ and $WS_B$ have the same sequence number but different value. Then $SC_A$ updates $WS_A$ to the state from the \emph{leader contract}.
         \end{enumerate}

     \end{algorithmic}

\end{algorithm}

\textbf{Bob (Alice) is offline or misbehaves.}
The second scenario occurs when one party, such as Bob, is offline. In this case, Ingrid only receives the closing request from Alice. However, Ingrid cannot immediately agree to Alice's request, as she has no knowledge of the virtual channel state and could potentially be deceived if Alice and Bob are colluding. To address this, Ingrid is allowed to unilaterally close the part of \vc involving the offline party first, thereby gaining access to the virtual channel state. Once Ingrid has this information, she can verify Alice's closing requests and act accordingly. In our game-theoretic analysis, we demonstrate that wardens will continue to publish honestly, even if Ingrid is initially unaware of the \vc state.

\textbf{Alice and Bob are both offline or collude.} The final pessimistic situation covers two scenarios: (1) Ingrid wants to close \vc, but both Alice and Bob are offline; (2) Alice and Bob send valid closing requests but with different closing states. The second scenario suggests that Alice and Bob are colluding during the \vc update procedure, generating two different final states to deceive Ingrid. In both cases, Ingrid cannot rely on Alice and Bob to close \vc properly. Therefore, Ingrid must execute Protocol~\ref{alg:uni-AB} sequentially to close both parts of \vc unilaterally. We stress that if Alice and Bob send identical but outdated closing requests, Ingrid incurs no loss by agreeing to them; so this scenario is excluded from our security analysis.

\textbf{Ingrid and Bob (Alice) are offline.}
In the final possible scenario, which is not covered in Protocol~\ref{alg:close-3online}, both Ingrid and Bob (or Alice) are offline during the closing process. In this case, the online party can execute Protocol~\ref{alg:uni-AB} to close their part of \vc unilaterally. Once the other two parties come back online, they can close their portion of \vc either unilaterally or collaboratively.

\subsection{Multi-hop \vc}
\label{sec:multi}

\vc can also be extended to multi-hop scenarios. We illustrate this with example involving four parties: Alice, Bob, Charlie, and Dave, connected through TPCs (see Fig.~\ref{fig:multihop}).

\begin{figure}
    \centering
    \includegraphics[width=\linewidth]{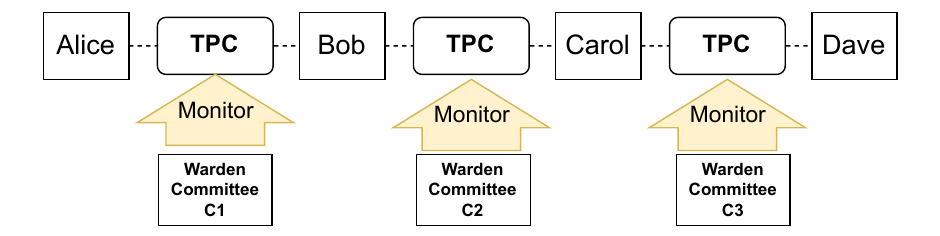}
    \caption{Multi-hop \vc}
    \label{fig:multihop}
\end{figure}

The main difference in a multi-hop scenario is the involvement of additional intermediary parties who are unaware of the \vc states. As with the two-hop \vc, it is crucial to ensure that each committee has enough wardens who know the latest \vc state. The key steps for a multi-hop \vc are as follows: (1) During the opening procedure, all four parties, along with the wardens, must collaborate to generate the \emph{register transaction}. Each party considers the channel open after receiving a quorum certificate from each committee. (2) To successfully update \vc, Alice and Dave must broadcast the updated state to all wardens and wait for a quorum certificate from each committee. (3) \vc can be optimistically closed once all four parties reach an agreement. In a pessimistic closing scenario, each smart contract should wait for at least $2f+1$ wardens to publish their states from the corresponding payment channel committee and cross-check with other contracts to determine the final closing state.

\section{Byzantine Security Analysis}
\label{sec:BAanalysis}

We prove \vc is secure under the Byzantine model when at least $t=2f+1$ wardens are honest in each committee $C_A$ and $C_B$.
The omitted proofs can be found in Appendix~\ref{app:BAvalysis}.
The security analysis of multi-hop \vc is detailed in Appendix~\ref{subsec:multi-byz}.

\begin{restatable}{theorem}{ThmSecurity}
\label{thm:security}
    \vc achieves \textbf{balance security} for honest parties under asynchrony, assuming at most $f$ Byzantine wardens in each committee.
\end{restatable}

\begin{restatable}{theorem}{ThmLiveness}
\label{thm:liveness}
\vc achieves \textbf{liveness} for honest parties under asynchrony, assuming at most $f$ Byzantine wardens in each committee.
\end{restatable}

\section{Game-theoretic Security Analysis}
\label{sec:GTanalysis}

We prove \vc is game-theoretically secure (Def.~\ref{def:Game-theoretic security}) by showing that no party can increase their utility by deviating from the protocol specification during the open, update, or close phases. Detailed proofs are provided in Appendix~\ref{sec:gameproof}, with the multi-hop analysis outlined in Appendix~\ref{subsec:multi-game}. Next, we focus on modeling the protocol as a game.

Since \vc participants act sequentially during closing, we model the process as an \emph{Extensive Form Game (EFG)}~\cite{osborne1994course}. In EFG, all possible protocol executions are represented in a game tree, where nodes denote decision points for players, branches represent potential actions, and leaves indicate utility outcomes based on the chosen strategies.

While \vc comprises three phases, the update phase is effectively reflected in the closing process. Therefore, we model the opening and closing procedures as an EFG to prove the scheme's security. As the opening phase is straightforward, we present the corresponding theorem below but refer the reader to Appendix~\ref{sec:gameproof} for the complete modeling and proof. We now focus on modeling the closing game.


\begin{restatable}{theorem}{Thmopen}
\label{thm:open}
    Rational main parties will open \vc correctly.
\end{restatable}

We assume the whole \vc closing process is initiated by Alice and Bob. According to Protocol~\ref{alg:close-3online}, the closing of Alice-Ingrid's part and Ingrid-Bob's part are symmetric; for simplicity, we only show the game involving Bob and Ingrid. The possible strategies for these two parties are defined in Table~\ref{tab: action-bob} and~\ref{tab: action-ingrid}. Depending on whether Ingrid knows the \vc state, there could be two different game trees depicted in Fig.~\ref{fig:bclose} and~\ref{fig:bclosep}. The blue dotted circle in Fig.~\ref{fig:bclose} represents that the parties inside the circle share the same information set~\cite{osborne1994course}. Specifically, if Bob sends a closing request while Alice is offline, Ingrid cannot determine whether the request is correct, which makes the game an EFG with imperfect information. Conversely, if Alice also sends a closing request or Ingrid gains knowledge of the virtual channel through unilateral closing, the closing game changes to an EFG with perfect information, as shown in Fig.~\ref{fig:bclosep}. \emph{Subgame 1} represents the unilateral closing game, involving the unilaterally closing party and the wardens. This game also reflects the update protocol, where wardens decide whether to sign honestly. Additionally, the cross-checking scheme ensures that both parts of \vc can only be unilaterally closed with the same state, guaranteeing Ingrid’s security and incentivizing rational parties to follow the update protocol. The definition for the subgame can be found in Appendix~\ref{sec:subgame}. We identify the SPNE of the closing game, thereby proving its security, as concluded in Theorem~\ref{thm:final}.

\begin{table}[!h]
\normalsize
\centering
\caption{Strategy for Bob ($B$)}
\label{tab: action-bob}
\resizebox{\columnwidth}{!}{%
\begin{tabular}{|c|p{\columnwidth}|}
\hline
$Old$ & Send a dishonest collaborative closing request with the old virtual channel state \\ \hline
$New$ & Send an honest collaborative closing request with the latest virtual channel state \\ \hline
$Uni$ & Unilaterally close the part of \vc with Ingrid \\ \hline
\end{tabular}%
}
\end{table}

\begin{table}[!h]
\normalsize
\centering
\caption{Strategy for Ingrid ($I$)}
\label{tab: action-ingrid}
\resizebox{\columnwidth}{!}{%
\begin{tabular}{|c|p{\columnwidth}|}
\hline
$Agree$ & Agree and sign the closing request \\ \hline
$Disagree$ & Disagree and go for unilateral closing  \\ \hline
$Ignore$ & Ignore the closing request or go offline \\ \hline
\end{tabular}%
}
\end{table}

\begin{figure}[!h]
    \centering
    \includegraphics[width=0.6\linewidth]{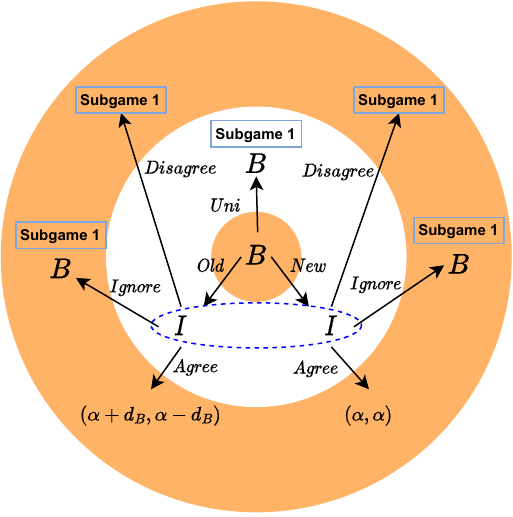}
    \caption{Closing game started by Bob. The utility is presented in the form of (Bob, Ingrid), $\alpha$ represents the incentive for closing \vc, and $d_B$ represents Bob's profit for close \vc incorrectly. The circle with blue dotted line denotes two parties share the same information set.}
    \label{fig:bclose}
\end{figure}

\begin{figure}[!h]
    \centering
    \includegraphics[width=0.6\linewidth]{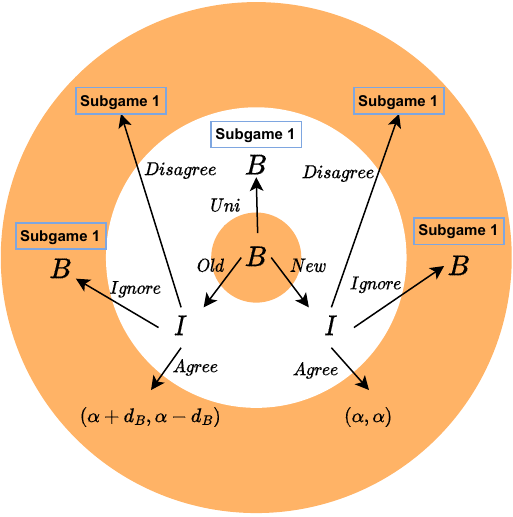}
    \caption{Closing game started by Bob; Ingrid gains knowledge about \vc, distinguishing Bob's requests.}
    \label{fig:bclosep}
\end{figure}

\begin{restatable}{theorem}{Thmgamesec}
\label{thm:final}
\vc closing protocol is game-theoretic secure.
\end{restatable}

\begin{table*}[!t]
\resizebox{\textwidth}{!}{%
\begin{tabular}{|l|c|ccc|ccc|ccc|}
\hline
\multirow{2}{*}{} &
  \multirow{2}{*}{On-chain transaction} &
  \multicolumn{3}{c|}{off/on-chain message} &
  \multicolumn{3}{c|}{\vc Cost} &
  \multicolumn{3}{c|}{Perun Cost} \\ \cline{3-11} 
 &
   &
  \multicolumn{1}{c|}{Alice (Bob)} &
  \multicolumn{1}{c|}{Ingrid} &
  Warden (10) &
  \multicolumn{1}{c|}{Gas} &
  \multicolumn{1}{c|}{ETH} &
  USD &
  \multicolumn{1}{c|}{Gas} &
  \multicolumn{1}{c|}{ETH} &
  USD \\ \hline
Deploy \& Open PC &
  2+10 &
  \multicolumn{1}{c|}{1+10 / 1} &
  \multicolumn{1}{c|}{1+10 / 1} &
  $14\leq m \leq 20$ / 10 &
  \multicolumn{1}{c|}{4444861} &
  \multicolumn{1}{c|}{0.0089} &
  14.83 &
  \multicolumn{1}{c|}{2819448} &
  \multicolumn{1}{c|}{0.0057} &
  9.54 \\ \hline
Update PC &
  0 &
  \multicolumn{1}{c|}{1+10 / 0} &
  \multicolumn{1}{c|}{1+10 / 0} &
  $14\leq m \leq 20$ / 0 &
  \multicolumn{1}{c|}{0} &
  \multicolumn{1}{c|}{0} &
  0 &
  \multicolumn{1}{c|}{0} &
  \multicolumn{1}{c|}{0} &
  0 \\ \hline
Open VC &
  0 &
  \multicolumn{1}{c|}{3+30 / 0} &
  \multicolumn{1}{c|}{2+40 / 0} &
  $35\leq m \leq 50$ / 0 &
  \multicolumn{1}{c|}{0} &
  \multicolumn{1}{c|}{0} &
  0 &
  \multicolumn{1}{c|}{0} &
  \multicolumn{1}{c|}{0} &
  0 \\ \hline
Update VC &
  0 &
  \multicolumn{1}{c|}{1+20 / 0} &
  \multicolumn{1}{c|}{0 / 0} &
  $14\leq m \leq 20$ / 0 &
  \multicolumn{1}{c|}{0} &
  \multicolumn{1}{c|}{0} &
  0 &
  \multicolumn{1}{c|}{0} &
  \multicolumn{1}{c|}{0} &
  0 \\ \hline
Optimistic VC close &
  0 &
  \multicolumn{1}{c|}{1+10 / 0} &
  \multicolumn{1}{c|}{1+10 / 0} &
  $14\leq m \leq 20$ / 0 &
  \multicolumn{1}{c|}{0} &
  \multicolumn{1}{c|}{0} &
  0 &
  \multicolumn{1}{c|}{0} &
  \multicolumn{1}{c|}{0} &
  0 \\ \hline
Optimistic PC close &
  2 &
  \multicolumn{1}{c|}{1 / 1} &
  \multicolumn{1}{c|}{1 / 1} &
  0 / 0 &
  \multicolumn{1}{c|}{252760} &
  \multicolumn{1}{c|}{0.0005} &
  0.84 &
  \multicolumn{1}{c|}{147788} &
  \multicolumn{1}{c|}{0.0003} &
  0.49 \\ \hline
\begin{tabular}[c]{@{}l@{}}Pessimistic VC close\\ \\ by Alice\end{tabular} &
  $2+7\leq m \leq 2+10$ &
  \multicolumn{1}{c|}{0 / 2} &
  \multicolumn{1}{c|}{0 / 0} &
  0 / $7\leq m \leq 10$ &
  \multicolumn{1}{c|}{1217307} &
  \multicolumn{1}{c|}{0.0024} &
  4.06 &
  \multicolumn{1}{c|}{418318} &
  \multicolumn{1}{c|}{0.0008} &
  1.40 \\ \hline
\begin{tabular}[c]{@{}l@{}}Pessimistic PC close\\ \\ by Alice\end{tabular} &
  $1+7\leq m \leq 1+10$ &
  \multicolumn{1}{c|}{0 / 1} &
  \multicolumn{1}{c|}{0 / 0} &
  0 / $7\leq m \leq 10$ &
  \multicolumn{1}{c|}{862459} &
  \multicolumn{1}{c|}{0.0017} &
  2.88 &
  \multicolumn{1}{c|}{275049} &
  \multicolumn{1}{c|}{0.0006} &
  0.92 \\ \hline
\end{tabular}%
}
\caption{Gas and communication cost for different procedures in \vc and Perun~\cite{dziembowski2019perun}. Warden refers to a single payment channel warden committee. PC denotes a payment channel and VC denotes a virtual channel. The column ``on-chain'' transaction counts transactions sent by the main parties (before ``+'') and by wardens (after ``+''). The column ``off/on-chain message'' counts the off-chain messages among the main parties (before ``+'') and to wardens (after ``+''), as well as the on-chain messages.}
\label{tab:gas fee}
\end{table*}

\section{Implementation and Evaluation}
\label{sec:eva}

In this section, we present a proof-of-concept implementation of \vc. The smart contract for the channel is written in Solidity V0.8.21 and deployed using Truffle V5.11.4 on a private Ethereum testing blockchain set up with Ganache V7.9.1. The source code is publicly available at https://github.com/BartWaaang/Thunderdome. We evaluate the gas fee cost of smart contracts on Ethereum as our primary metric. Gas fees are paid by blockchain users to miners as transaction fees, which depend on both the amount of data and the complexity of operations in the transaction. In our implementation, we set the per unit gas price to $2 \times 10^{-9}$ ETH, a typical setting. Additionally, we set the exchange rate between ETH and USD at $1668:1$, based on the latest rate at the time of writing. Under these conditions, the total cost to deploy the payment channel contract supporting our \vc protocol is approximately 0.006 ETH (3,174,554 gas, around 10.59 USD). We evaluate the cost for each procedure using an evaluation script written in JavaScript, which generates testnet accounts for the main parties and wardens, simulating their actions. The gas fee costs for various procedures in our protocol are summarized in Table~\ref{tab:gas fee}.

\begin{figure}[!h]
    \centering
    \includegraphics[width=\linewidth]{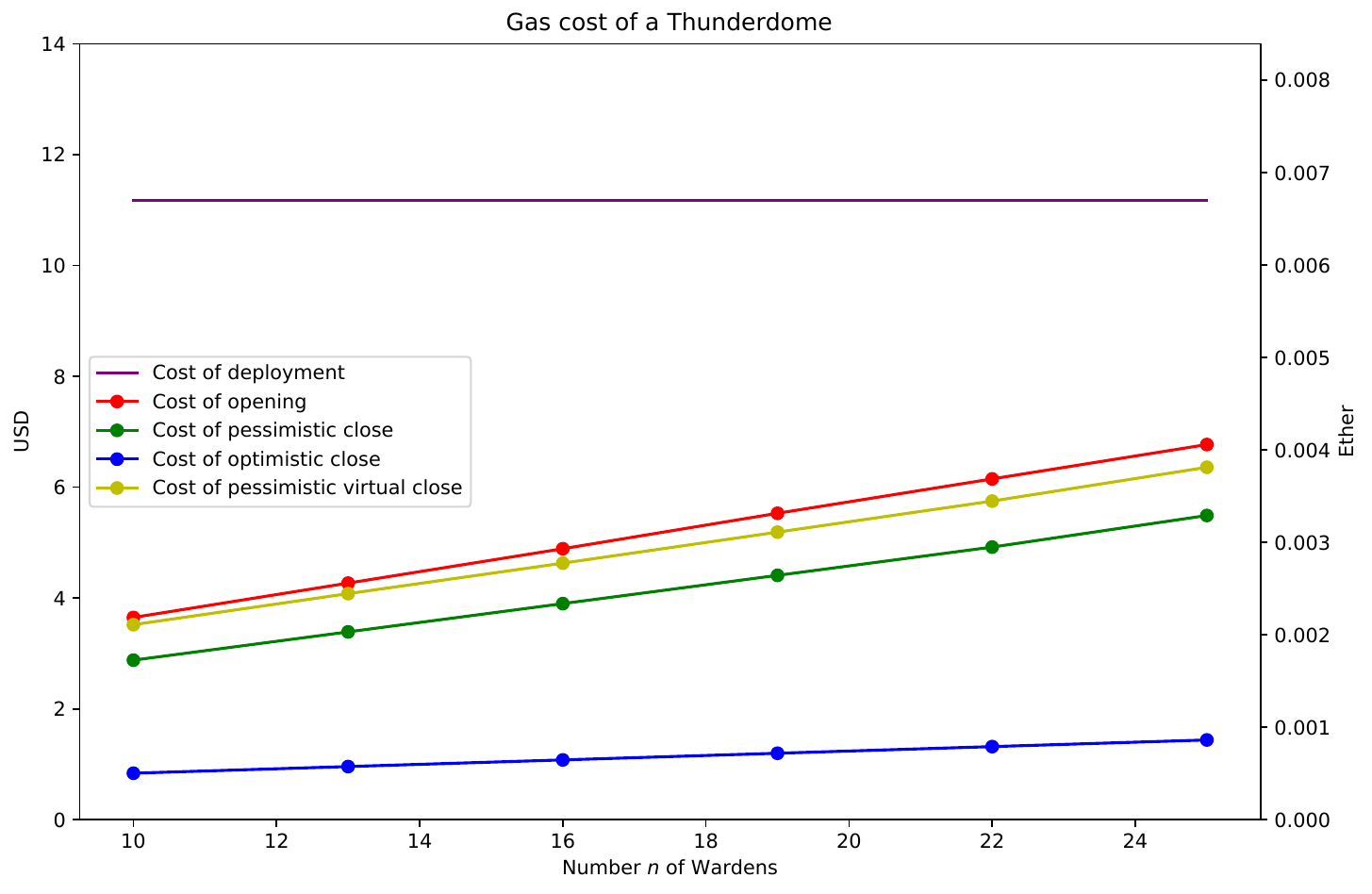}
    \caption{Gas fee changes in \vc 's procedures as the number of wardens per committee grows from 10 to 25.}
    \label{fig:gaschange}
\end{figure}

We compare our gas fee costs with Perun, as shown in Table~\ref{tab:gas fee}. To deploy and open an underlying payment channel (PC), Alice (Bob) and Ingrid, along with 10 wardens, send funding transactions on-chain. The total cost for \vc amounts to 0.0089 ETH, which is 1.5 times the cost of Perun, primarily due to the additional funding transactions required from the wardens. Given the need for extra funding transactions and the removal of timelocks, this increase is reasonable. Regarding off-chain communication, \vc requires quorum certificates from committees to open the channel. Additionally, updating the payment channel for \vc's "virtual lock" involves two rounds of broadcasting, leading to higher message complexity compared to updating a standard payment channel. To close the payment channel optimistically, Alice and Ingrid must publish the closing state agreed upon by both parties on-chain, costing 0.0005 ETH. The pessimistic closing, where at least 7 wardens publish the state on-chain after one party (e.g., Alice) submits a closing request, costs approximately 0.0017 ETH. The message cost for the optimistic closing of the virtual channel is the same as for a payment channel update. In pessimistic situations, closing the virtual channel requires the closing party to submit the register transaction to the smart contract and wardens to publish \vc states on-chain, costing 0.0024 ETH — 3 times the cost of Perun. This additional closing cost arises because all wardens must publish on-chain transactions for the closing state, similar to the \vc opening process. However, given that pessimistic scenarios like censorship are infrequent and the difference in optimistic payment channel closing is only around 0.4 USD, \vc remains comparable to Perun in most cases. We also evaluate how gas fees change as the size of each warden committee increases from 10 to 25. Naturally, costs for all procedures rise with more wardens, as more participants need to interact with the smart contract during \vc execution. The results are illustrated in Fig.~\ref{fig:gaschange}.

\section{Extensions}
\noindent\textbf{Privacy-preserving protocol.} In \vc, the privacy of virtual channel states is not protected from the wardens. It is not straightforward to design a privacy-preserving protocol that ensures both: a) Ingrid can recover the state, and b) the wardens cannot retrieve the state, even in collusion with Ingrid, while still being able to verify the validity of the private state. A compromise is to enable state recovery and privacy against wardens, assuming Ingrid does not collude with them. However, even in this case, the repetitive nature of updates poses challenges. Trapdoor one-way functions, though promising candidates, lack IND-CPA security. On the other hand, asymmetric encryption schemes offering such protection require a new random number for each encrypted value~\cite{bellare2009hedged}. In our protocol, this can be implemented by Alice, Bob, and Ingrid sharing a vector of random values during the creation of the register transaction in \vc open. However, this solution has two main drawbacks: a) the number of states that can be updated in the virtual channel is limited by the length of this vector. Optimistically, Ingrid may occasionally renew it, but this depends on Ingrid being online and cooperative; b) the virtual channel parties must provide the wardens with corresponding evidence (e.g., ZK proof) at each update to certify that the correct random number was used, adding communication and computational overhead to \vc's update process. Other cryptographic tools, such as Verifiable Secret Sharing~\cite{stadler1996publicly} or threshold signatures~\cite{shoup2000practical}, could be leveraged to enhance privacy in \vc transactions. For example, one could explore the possibility of allowing Ingrid and $f+1$ wardens from her committee to recover the states. This scheme may be secure under the assumption of $2f+1$ honest wardens, but it remains an open question whether suitable incentive mechanisms can ensure security in the rational model.

\vspace{0.2cm}\noindent\textbf{Scaling the multi-hop protocol.} An interesting question is whether it is feasible to scale the multi-hop \vc by involving fewer wardens. We argue that this is possible by defining the employed wardens and the closing rules during the opening phase. Assuming that only $n' < n$ wardens are selected per committee, two necessary compromises arise: (1) \emph{Selection algorithm}: Among $n' = 3f' + 1$ wardens, there should still be at most $f'$ Byzantine wardens; (2) \emph{Extra deposit}: If the multi-hop \vc channel balance is $v$, then each warden must lock collateral of at least $\frac{v}{f'} > \frac{v}{f}$. However, the validity rule (e.g., requiring a quorum certificate per committee during the update procedure) remains unchanged for two key reasons. First, the security model assumes that the two end parties of the multi-hop \vc could collude, necessitating warden committees to ensure there are no two conflicting valid states. Second, intermediary parties without knowledge of the virtual channel state rely on the wardens in the local committee to publish the correct state.  

\vspace{0.2cm}\noindent\textbf{Extend to state channel.} 
\vc is introduced as a virtual payment channel but can also be extended to function as a virtual state channel. Although numerous state channel primitives exist in the literature~\cite{dziembowski2018general,miller2019sprites,mccorry2019pisa,mccorry2020you} that generalize payment channels to more complex applications, these constructions cannot be directly applied to \vc due to their synchronous nature, which relies on timelocks. However, \vc can utilize two Brick state channels~\cite{avarikioti2019brick2} as the underlying layer, provided that each state can be mapped to a monetary valuation to ensure security in the game-theoretic model. This assumption, as proposed in \cite{avarikioti2019brick2}, is essential for converting the payment channel primitive into a rationally secure state channel, which \vc also adopts.

\section*{Acknowledgement}

This work was supported by the Vienna Science and Technology Fund (WWTF) through the project 10.47379/ICT22045;

\bibliographystyle{plain}
\bibliography{reference}

\appendix

\section{Byzantine Model Security Proof}\label{app:BAvalysis}

To start with, we show that the cross-checking scheme maintains security in the following lemma.

\begin{lemma}
\label{lem:secupdate}
     The cross-checking scheme guarantees \textbf{balance security} for Ingrid when Alice and Bob attempt to unilaterally close the channel at the same time.
\end{lemma}

\begin{proof}
    Assume Alice and Bob collude to violate \textbf{balance security} for Ingrid. Without loss of generality, there are two possible cases: (1) Alice closes one part of \vc with a state having a lower sequence number, while Bob closes with a different state with a higher sequence number; (2) Alice and Bob close with states that have different values but the same sequence number.

    For the first case, assume Alice initiate unilateral closing first. After receiving $2f+1$ signatures from the wardens, Alice's contract will send the state notification to Bob's contract. According to Protocol~\ref{alg:crosschecking}, since Bob's contract has not yet received enough signatures from its wardens, it will store the state received from Alice's contract and reply with a Null. Subsequently, regardless of the state later published to Bob's contract, it will only close with the state received from Alice's contract. Conversely, if Bob initiates the closing first, then both parts of \vc will close with Bob's state. In the event that Alice and Bob initiate unilateral closings simultaneously, both contracts will receive two different states—one from Alice and one from Bob. By comparing the sequence numbers, both contracts will decide on Bob's state, as it has the higher sequence number. Thus, the first case cannot occur.

    For the second case, similar reasoning applies. Both parts of \vc will close with the state from the party who initiates the unilateral closing first. If Alice and Bob start unilateral closings simultaneously, the two states known to both contracts will have the same sequence number. To resolve this tie, the "leader contract" scheme is employed according to Protocol~\ref{alg:crosschecking}. Since the leader contract requires agreement from all three main parties, the existence of at least one honest party ensures that there is only one leader contract. If Alice's contract serves as the leader contract, both parts of \vc will close with the state from Alice's contract. Consequently, the second case also cannot occur.

    Conclusively, the two parts of \vc can only be closed with the same state, ensuring that \textbf{balance security} is always guaranteed for Ingrid.
\end{proof}

Based on Lemma~\ref{lem:secupdate}, we can prove Theorem~\ref{thm:security} and~\ref{thm:liveness}.

\ThmSecurity*

\begin{proof}
    To prove the balance security of \vc, we aim to show that no honest participants lose coins when the \vc is closed. First, \vc is set up correctly according to the protocol, as honest parties will follow the protocol and prevent Byzantine participants from improperly updating the payment channel. Therefore, we focus on the update and close procedures. As shown in Lemma~\ref{lem:secupdate}, security is maintained even when parties deviate from the update protocol. In the optimistic closing scenario, where all parties are honest and responsive, once all three parties agree to close \vc, the two underlying TPCs are updated according to the latest virtual channel state. As long as all parties behave rationally, the optimistic closing of \vc is balance secure, as no rational party would agree to close the channel incorrectly.

    For pessimistic situations, where some parties are offline or send dishonest closing requests, we begin by considering cases where only one main party is Byzantine. According to Protocol~\ref{alg:close-3online}, the honest online party will invoke Protocol~\ref{alg:uni-AB} f the payment channel's counterparty is offline or Byzantine. We prove balance security in this scenario by contradiction. Let $s_i$ denote the closing state of the virtual channel. Suppose there is a later state $s_k$ such that $k > i$. Thus, $s_i$ is not the latest valid virtual channel state agreed upon by the end parties. If one part of the virtual channel closes with $s_i$ (e.g., the Alice-Ingrid part), at least $t = 2f+1$ wardens should have proposed closing announcements, and the maximum sequence number would be $i$, per Protocol~\ref{alg:uni-AB}. Under the threat model, at most $f$ wardens in $C_A$ can be Byzantine. Therefore, at least $f+1$ honest wardens should not have received state $s_k$. However, a valid virtual channel state must be verified by $t = 2f+1$ wardens, as specified by Protocol~\ref{alg:vupdate}. In this case, the number of wardens required would exceed the size of committee $C_A$, leading to a contradiction.

    Moreover, when only Ingrid and Alice are online, and Alice is Byzantine, Ingrid can distinguish Alice's dishonest closing request by unilaterally closing the channel with Bob first. Thus, this analysis also accounts for the intermediary party having no prior knowledge of the \vc state.

    Next, we consider the pessimistic case where two main parties are Byzantine; say Alice and Ingrid. If Bob is online, he can invoke Protocol~\ref{alg:uni-AB} to unilaterally close the Ingrid-Bob part of \vc without incurring any loss. Furthermore, according to the analysis of Protocol~\ref{alg:uni-AB}, Alice and Ingrid cannot close the Ingrid-Bob part incorrectly, even if Bob is offline. Similarly, if Alice and Bob are Byzantine, and Ingrid is online, Ingrid can securely close \vc by running Protocol~\ref{alg:uni-AB}to sequentially close the Alice-Ingrid and Ingrid-Bob parts. Alice and Bob also cannot close these parts differently or incorrectly if Ingrid is offline.
\end{proof} 

\ThmLiveness*

\begin{proof}
    We now demonstrate that every valid operation, whether an \textit{update} or \textit{close}, is either committed or invalidated.

Suppose the operation is a valid update that was never committed. If the operation is valid and at least one party in the channel is honest, the wardens will eventually receive the state update. However, since the update was never committed, there must be at least $f + 1$ wardens who did not sign the state. Given that at most $f$ wardens in each committee can be Byzantine, at least one honest warden must have refrained from signing the valid update state. According to Protocol~\ref{alg:vupdate}, an honest warden performs specific verifications, and if those verifications are satisfied, the warden signs the new state. The only reason an honest warden would not sign is if the state were invalid, which contradicts our assumption. If both end parties of the virtual channel are Byzantine, the update cannot occur; however, the honest Ingrid can unilaterally close the channel to adjust or retrieve the balance.
    
    Next, suppose the operation is a valid close. In the optimistic scenario, where all parties are online, the payment channel involving the honest party will be correctly updated or closed using the virtual channel state. Due to the liveness property of the TPC, the virtual channel state will become committed. In the case where some parties are offline, any online party can unilaterally close \vc with the assistance of wardens and smart contracts.

Thus, liveness is ensured under the Byzantine model. 
\end{proof}

\section{Game-theoretic Model}
\label{sec:gamemodel}
In the game-theoretic setting, our goals mirror the security properties in the Byzantine model. However, to prove these properties we employ different tools as now we consider rational agents that try to maximize their utility. To this end, we model our protocol as an Extensive Form Game (EFG) with Perfect Information.

\begin{definition}[Extensive Form Game-EFG]

And Extensive Form Game (EFG) is a tuple $\mathcal{G}=(N,H,P,u)$, where set $N$ represents the game player, the set $H$ captures EFG game history, $T \subseteq H$ is the set of terminal histories, $P$ denotes the next player function, and $u$ is the utility function. The following properties are satisfied.

(A) The set $H$ of histories is a set of sequence actions with
    \begin{enumerate}
        \item $\emptyset \in H$;
        \item if the action sequence $(a_k)^K_{k=1}\in H$ and $L<K$, then also $(a_k)^L_{k=1} \in H$;
        \item an action sequence is terminal $(a_k)^K_{k=1}\in T$, if there is no further action $a_{K+1}$ that $(a_k)^{K+1}_{k=1}\in H$.
    \end{enumerate}

(B) The next player function $P$
    \begin{enumerate}
        \item assigns the next player $p \in N$ to every non-terminal history $(a_k)_{k=1}^K \in H\setminus T$;
        \item after a non-terminal history $h$, it is player $P(h)$'s turn to choose an action from the set $A(h)=\{a:(h,a)\in H\}$.
    \end{enumerate}
    
A player $p$'s strategy is a function $\sigma_p$ mapping every history $h \in H$ with $P(h)=p$ to an action from $A(h)$. Formally,$$\sigma_p:\{h \in H:P(h)=p\} \rightarrow\{a:(h,a)\in H, \forall h\in H \},$$
such that $\sigma_p(h)\in A(h)$.

\end{definition}
The subgame of an EFG is a subtree determined by a certain history (i.e., whose root note is the last history node) and is formalized by the following definition~\cite{rain2022towards}:

\begin{definition}[EFG subgame]
    The subgame of an EFG $\mathcal{\varphi} = (N,H,P,u)$ associated to history $h \in H$ is the EFG $\mathcal{\varphi}(h) = (N,H_{|h},P_{|h},u_{|h})$ defined as follows: $H_{|h} := {h' | (h,h') \in H}$, $P_{|h}(h') := P(h,h')$, and $u_{|h}(h') := u(h,h')$.
\end{definition}

The core concept of our proof methodology is to demonstrate that utility-maximizing players will consistently choose to adhere to the protocol specification at every step of the protocol. Our focus primarily lies in the closing process of \vc. We achieve this by leveraging the concept of Subgame Perfect Nash Equilibrium (SPNE) (Definition~\ref{def:spne}) \cite{osborne1994course}. In particular, in Section~\ref{sec:GTanalysis}, we show that the strategy profile encompassing the `correct protocol execution' of \vc is an SPNE of our game, using a technique known as backward induction.

\begin{definition}[Subgame Perfect Nash Equilibrium (SPNE)]
\label{def:spne}
A subgame perfect equilibrium strategy is a joint strategy $\sigma = (\sigma_1, ...,\sigma_n) \in S$, s.t. $\sigma_{|h} = (\sigma_{1|h}, ...,\sigma_{n|h})$ is a Nash Equilibrium of the subgame $\mathcal{\varphi}(h)$, for every $h \in H$. The strategies $\sigma_{i|h}$ are functions that map every $h' \in H_{|h}$ with $P_{|h}(h') = i$ to an action from $A_{|h}(h')$.
\end{definition}

Using the definitions above, we establish a security property for the closing process of \vc in the game-theoretic model, as outlined in Definition~\ref{def:Game-theoretic security}. It is important to note that the security of the opening process of \vc is straightforward, assuming that the parties are willing to open \vc, and the channel will be correctly opened according to the protocol specification.

\begin{definition}[Game-theoretic security]
\label{def:Game-theoretic security}
    A closing protocol is \emph{game-theoretic secure} if each main party $n\in \{A,I,B\}$'s utility of the closing game $\mathcal{G}$'s history $h^*$, which is composed by the SPNE joint strategy $\sigma^*$, is no less than $\alpha-\epsilon$, $u_{n}(h^*) \geq \alpha-\epsilon$.  
\end{definition}
Here, $\alpha$ represents the regular profit parties receive from closing \vc, ensuring that they are incentivized to close the virtual channel, and $\epsilon$ denotes the cost associated with unilateral closing, required when some parties are offline.

\section{Unilateral Closing Game}
\label{sec:subgame}

The unilateral closing game involves two types of participants: the unilateral closing party and the wardens, with their strategies defined in Table~\ref{tab: action-p} and Table~\ref{tab: action-warden}. Each warden has two strategies: $P_l$ and $P_o$. Assuming Bob is the current party unilaterally closing, and the Alice-Ingrid part has already been closed unilaterally, $P_l$ represents two possible actions: \textit{publish the latest state known to the warden} or \textit{publish the same state used for Alice-Ingrid's closing}. Conversely, $P_o$ represents publishing an old or inconsistent state. For simplicity, we denote these two strategies as "publish the honest state" and "publish the dishonest state." For the closing party, the only strategy is $PF$, which involves deciding how many proofs-of-fraud to publish. This game can also represent the update game, where $P_l$ corresponds to signing the state honestly, $P_o$ to signing two different states, and $PF$ to providing proofs-of-fraud. To avoid redundancy, we focus on the proof for the closing game.

\begin{table}[!h]
\normalsize
\centering
\caption{Strategy for Closing party ($P$)}
\label{tab: action-p}
\resizebox{\columnwidth}{!}{%
\begin{tabular}{|c|p{\columnwidth}|}
\hline
$PF$ & Number of valid proofs-of-fraud she publishes on-chain \\ \hline
\end{tabular}%
}
\end{table}

\begin{table}[!h]
\normalsize
\centering
\caption{Strategy for warden $i$ ($W_i$)}
\label{tab: action-warden}
\resizebox{\columnwidth}{!}{%
\begin{tabular}{|c|p{\columnwidth}|}
\hline
$P_l$ & Publish the honest state $\backslash$ Sign honestly  \\ \hline
$P_o$ & Publish the dishonest state $\backslash$ Sign dishonestly  \\ \hline
\end{tabular}%
}
\end{table}

During the unilateral closing process, multiple wardens participate in the game. For simplicity, we assume that each payment channel committee consists of $3f+1=10$ wardens. According to Protocol~\ref{alg:uni-AB}, at least 7 of these wardens from the closing party's warden committee are required to publish states onchain. To further simplify the analysis, we divide these 10 wardens into three groups, denoted as $\{W_1, W_2, W_3\}$. Specifically, $W_1$ contains 3 wardens who are unaware of the latest virtual channel state, while $W_2$ and $W_3$ consist of 4 and 3 wardens, respectively, who know the latest state. Since all wardens are treated identically in this analysis, we focus only on the total number of wardens adopting a particular strategy. Additionally, we introduce the following assumption for wardens' strategies:
\begin{assumption}
\label{asm:simultanious}
    The wardens act simultaneously with respect to their interaction with the blockchain.
\end{assumption}

Assumption~\ref{asm:simultanious} is reasonable for several reasons. First, wardens are compensated to monitor the state of the blockchain, ensuring that they observe any changes within a specific time frame. Second, any transaction that is to be published on the blockchain must first be submitted to miners, who then include these transactions in the block they are mining. Each block, eventually added to the blockchain, contains multiple transactions. Additionally, our protocol includes an incentive mechanism, provided through other unidirectional channels~\cite{gudgeon2020sok}, which does not affect the \vc protocol. The first set of wardens to act will receive the incentive coins, meaning wardens must decide on transaction publishing simultaneously, without waiting for others to act before making their own decisions.

Consequently, our game analysis treats these 10 wardens as a single party. For the order of published states, we consider the worst-case scenario, where the wardens' transactions are published on-chain in the order of $W_1$ to $W_3$, w.l.o.g. Since all three wardens in $W_1$ experience latency, even if they choose $P_l$ as their strategy, they will still publish outdated states, and the closing party cannot provide proofs-of-fraud against them. As a result, the wardens' strategy combinations can be represented as $\{3P_l, aP_l + bP_o, /\}$. Here, $3P_l$ denotes the worst case in which the three wardens from $W_1$ honestly publish outdated states. The term $aP_l + bP_o$ represents the wardens in the second group ($W_2$), where $a$ wardens publish the latest state, and $b$ wardens may have been bribed by the closing party to publish outdated states on-chain. Since we only consider the worst-case scenario where only 7 wardens with collateral publish states, the strategies of the wardens in the third group ($W_3$) are disregarded in our analysis.

After simplifying the wardens' strategy combinations into a single party strategy, denoted as $W$, we can now present the model of the unilateral closing protocol, illustrated in Fig.~\ref{fig:uni}. To express each party's utility function, we use the following notation: $\alpha$ represents the regular profit from closing \vc; $d_A$ denotes the profit for the closing party if a part of \vc is closed with an incorrect state; $k$ represents the incentive fee awarded to the wardens who publish the state; and $c$ denotes a warden's on-chain collateral.

\begin{figure*}
        \centering
    \includegraphics[width = .7\textwidth]{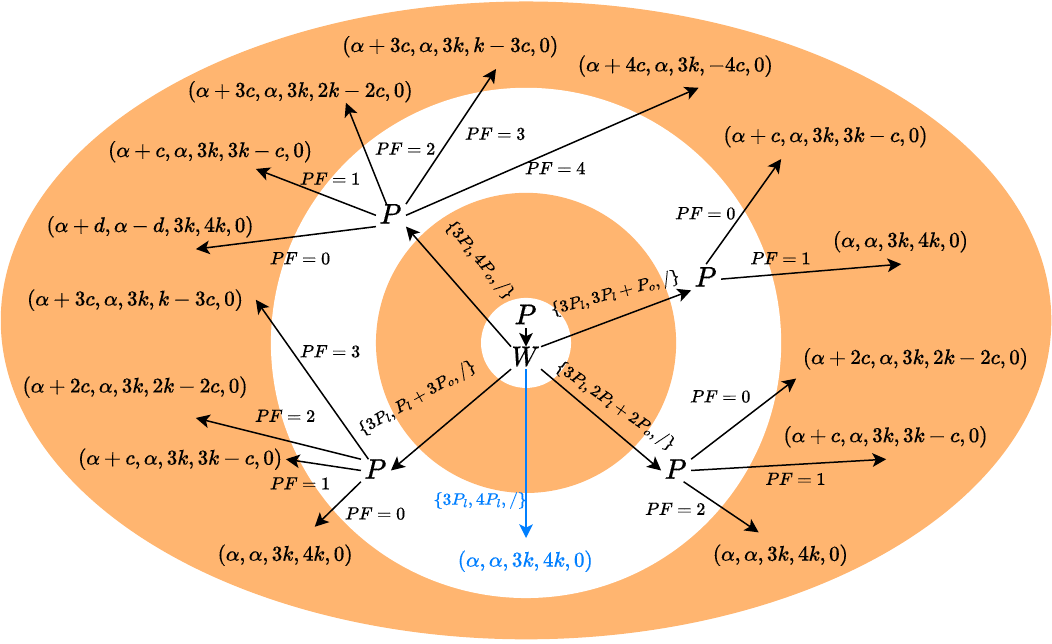}
    \caption{Subgame 1 -- Unilateral closing game started by party $P$, $P \in \{A,B,I\}$. The utility is shown in order of \{$P$, $P$'s counterparty, warden group 1, warden group 2, warden group 3\}. The SPNE of the game is shown with the blue arrow.}
    \label{fig:uni}
\end{figure*}

\section{Game-theoretic Analysis (omitted proofs)}
\label{sec:gameproof}

\begin{figure}
    \centering
    \includegraphics[width=0.7\linewidth]{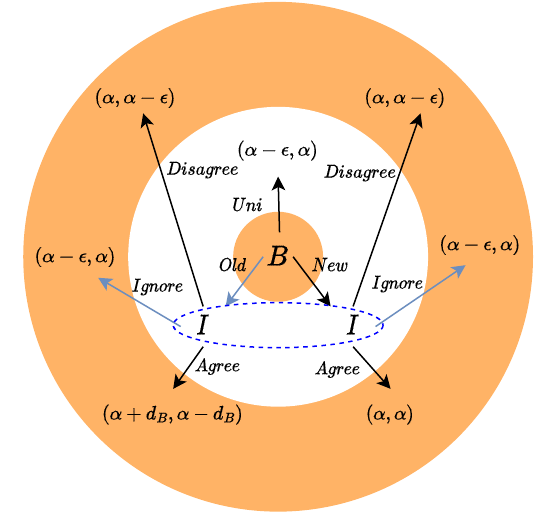}
    \caption{Closing game started by Bob with Ingrid has no knowledge about \vc. $\epsilon$ is a small value to denote the cost for unilateral closing. The SPNE strategy is shown by the blue arrow.}
    \label{fig:bcloseimpE}
\end{figure}

\begin{figure}
    \centering
    \includegraphics[width=0.7\linewidth]{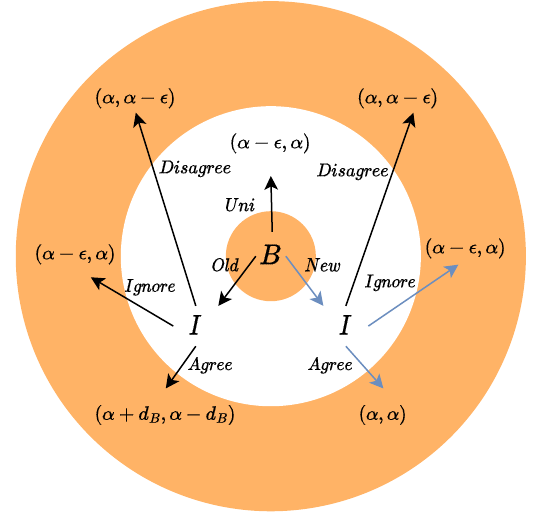}
    \caption{Closing game started by Bob with Ingrid has knowledge about \vc. $\epsilon$ is a small value to denote preference. The SPNE strategy is shown by the blue arrow.}
    \label{fig:bclosepE}
\end{figure}

To prove the game-theoretic security of \vc, we start by establishing the security of the opening procedure, as demonstrated in Theorem~\ref{thm:open}.

\Thmopen*

\begin{proof}
    Assume the current state of both the Alice-Ingrid and Ingrid-Bob payment channels is $(a,b)$. If all three parties agree to open a virtual channel between Alice and Bob with an initial state of $(c,d)$, both channels are updated to $(a-c, b-d)$. A security concern arises if the funding transactions differ. For example, Ingrid could collude with Alice and mislead Bob by updating the Ingrid-Bob channel with an amount $d' \neq d$. If $d' > d$, Ingrid may incur a loss, as Alice would be able to pay Bob more coins in the virtual channel than she has locked in the Alice-Ingrid channel. Conversely, if $d' < d$, Alice could lose funds. Since either Alice or Ingrid would face a financial loss in this scenario—and both are modeled as utility-maximizing agents—such an incorrect opening would not occur.
\end{proof}

After establishing the security of the opening process, the remaining key step in proving the security of \vc under the game-theoretic model is identifying the Subgame Perfect Nash Equilibrium (SPNE) strategy for the closing game. The SPNE of the closing game can be derived by comparing the utilities at the leaf nodes and working backward to the root of the game tree, using backward induction. We begin with an analysis of the unilateral closing subgame. For the unilateral closing game, the following holds:

\begin{lemma}
\label{lem:subgame1}
    As long as the warden collateral $c$ and the \vc channel balance $v$ satisfy $(f+1)c>v$, the SPNE strategy for subgame 1 $\mathcal{G}_{sub1}$ is $\{\sigma_P(\bot)=\bot,\sigma_W(\bot)=P_l\}$
\end{lemma}

\begin{proof}
    
    In the unilateral closing game $\mathcal{G}_U$ shown in Fig.~\ref{fig:uni}, there is only one history, $h_r$, that results in the unilateral closing party's counterparty losing coins. In this history $h_r$, where player $P$ reaches a final utility $u_P(h)=\alpha+d$, the warden group $W$'s strategy is $\sigma^r{W}(\bot)=\{3P_o, 4P_o, /\}$, and the closing party $P$'s strategy is $\sigma^r_P(\{3P_o, 4P_o, /\})$. For this joint strategy $\sigma^r=(\sigma^r_{W_2}, \sigma^r_P)$ to be an SPNE strategy, $P$'s utility in history $h_r$ must dominate other histories, which requires $d>(f+1)c=4c$. Otherwise, $P$'s SPNE strategy in the subgame $\mathcal{\varphi}(\{3P_o, 4P_o, /\})$ becomes $\sigma_P(\{3P_o, 4P_o, /\})= {PF=4}$.

Thus, as long as $(f+1)c > v \geq d$ holds for our \vc protocol, we can ensure that $\sigma^r$ will not be the SPNE strategy. By comparing the second warden group $W_2$'s utility across all other histories, we find that for $W_2$'s strategy $\sigma_{W_2}(\bot)=\{\{aP_l+bP_o\} | a+b=4, a\geq0, b\geq0\}$, where $P_o$ is included, $P$'s SPNE strategy will be $\sigma_P(\sigma_{W_2}(\bot))=\{PF=b\}$. The final utilities for $W_2$ in these histories are $u'=ak-bc \leq 4k$, which is no greater than the final utility of history $h^*=\{\bot,(\{3P_l, 4P_l, /\})\}$, where all wardens publish the latest state on-chain. 
\end{proof} 

We note that Subgame 1 illustrates the strategy and utility of wardens in one payment channel committee. In practice, some of these wardens may also participate in another unilateral closing game initiated by a \vc counterparty due to committee overlap. Although the wardens remain the same, they deposit collateral in different payment channel smart contracts. Therefore, these two games are independent of each other. Additionally, we further prove that the result of Lemma~\ref{lem:subgame1} holds for the closing party $I$, even if $I$ lacks knowledge of the \vc states.

\begin{lemma}
    
\label{lem:distinguish}
    For the unilateral closing game $\mathcal{G}_{I}$ started by Ingrid, who may not know \vc states, the joint SPNE strategy is $\{\sigma_I(\bot)=\bot,\sigma_W(\bot)=P_l\}$, for $p_1>0$, where $p_1$ represents the probability for Ingrid to collude with the channel counterparty.
\end{lemma}

\begin{proof}
As shown in Fig.~\ref{fig:distinguish}, the final utility for wardens depends on the probability $p \in \{p_1, p_2 | p_1 + p_2 = 1\}$ that Ingrid has knowledge of the \vc state. We simplify the original unilateral closing game $\mathcal{G}_I$ by categorizing the warden's strategy into two types. The honest strategy, $\sigma^h_W(\bot)=\{3P_l, 4P_l, /\}$, represents each warden publishing the latest state they know. The dishonest strategy, $\sigma^{dish}_W(\bot)=\{3P_l, aP_l + bP_o, /\}$, represents $b$ wardens publishing outdated states on-chain, where $a + b = 4$ and $0 \leq a < 4$. The warden's expected utility for the dishonest strategy $\sigma^{dish}W(\bot)$ is $u_{dish} = p_1(ak - bc) + 4p_2k = (ap_1 + 4p_2)k - bcp_1$. The expected utility for the honest strategy $\sigma^h_W(\bot)$ is $u_{h} = 4k$. Given that $a < 4$ and $p_1 > 0$, it follows that $u_{dish} = (ap_1 + 4p_2)k - bcp_1 < 4k = u_{h}$. Therefore, the honest strategy $\sigma^h_W(\bot)=\{3P_l, 4P_l, /\}$ yields a higher utility, proving that Lemma~\ref{lem:subgame1} still holds for the closing party without knowledge of the \vc state. 
\end{proof}

With Lemma~\ref{lem:subgame1} and~\ref{lem:distinguish} we can have the following corollary:
\begin{corollary}
\label{lem:update}
    Rational wardens will honestly follow the update protocol.
\end{corollary}

\begin{proof}
    In the game-theoretic model, wardens could be bribed to sign two different states during a \vc update. However, since we require $2f+1$ accountable wardens to sign the state—wardens who can be penalized for such misbehavior—the security of both the update and unilateral closing processes can be analyzed using the same game model. This model discussed in Appendix~\ref{sec:subgame}, yields the same result for the update protocol. The proof for Lemma~\ref{lem:subgame1} also applies to the update protocol. In this context, $P_l$ represents honestly signing a single state with the latest sequence number, while $P_o$ indicates signing multiple states with the same sequence number. During the closing procedure, the closing party can still punish double-signing through fraud proofs, just as outdated state publishing is penalized. Therefore, for the update protocol, the SPNE strategy for wardens remains $P_l$, meaning they sign only one state. Additionlly, the proof for Lemma~\ref{lem:distinguish} also applies to the update protocol and implies that wardens cannot determine whether Ingrid is able to punish them for signing two different states.
\end{proof}

After obtaining the final SPNE utility of Subgame 1, we substitute the utility function with the subgame notation in the closing game initiated by Bob. The results are shown in Fig.~\ref{fig:bcloseimpE} and~\ref{fig:bclosepE}. Here, we subtract a small value, $\epsilon$, from the unilateral closing party's final utility to represent the cost incurred after unilateral closing. This value $\epsilon$ accounts for potential costs such as time and incentive fees. By incorporating $\epsilon$ into the utility function, we can also demonstrate the party's preference between collaborative and unilateral closing.

For the interaction between Bob and Ingrid, we present the following lemmas:

\begin{lemma}
\label{lem:imperfect}
    For the closing game $\mathcal{G}^{no\_knowledge}_B$, which is started by Bob when Ingrid has no knowledge about \vc, the joint SPNE strategy $\{\sigma_B(\bot)=Old, \sigma_I(Old)=\sigma_I(New)=Ignore\}$.
\end{lemma}

\begin{proof}
In the closing game $\mathcal{G}^{no_knowledge}_{B}$, Ingrid lacks prior knowledge about \vc and thus cannot distinguish whether the closing requests sent by Bob are outdated or current. In this scenario, the closing game initiated by Bob becomes an Extensive Form Game (EFG) with imperfect information. It is known that the SPNE of an EFG with imperfect information is equivalent to the Nash Equilibrium (NE) strategy of the Normal Form Game (NFG) transformed from the EFG~\cite{osborne1994course}. Therefore, to find the SPNE, we first transform the closing game into an NFG, as shown in Table~\ref{tab:NFG}.

    \begin{table}[h]\tiny

\resizebox{\columnwidth}{!}{%
\begin{tabular}{|c|c|c|c|}
\hline
      & $Ignore$                                             & $Agree$                    & $Disagree$                 \\ \hline
$Uni$ & $(\alpha-\epsilon,\alpha)$                           & $(\alpha-\epsilon,\alpha)$ & $(\alpha-\epsilon,\alpha)$ \\ \hline
$Old$ & \textcolor{blue}{$(\alpha-\epsilon,\alpha)$} & $(\alpha+d_B,\alpha-d_B)$  & $(\alpha,\alpha-\epsilon)$ \\ \hline
$New$ & $(\alpha-\epsilon,\alpha)$                           & $(\alpha,\alpha)$          & $(\alpha,\alpha-\epsilon)$  \\ \hline
\end{tabular}%
}
\caption{Transformed NFG of Bob's closing game with imperfect information}
\label{tab:NFG}
\end{table}

By comparing the utility functions in the table, we observe that Bob's strategy, $Old$, dominates all other strategies, while $Ignore$ is the dominant strategy for Ingrid. Therefore, the joint strategy $\sigma^* = \{\sigma_B, \sigma_I\} = \{Old, Ignore\}$ becomes the SPNE strategy for the game $\mathcal{G}^{no_knowledge}_{B}$. Two possible outcomes may occur after Ingrid ignores Bob's request. First, to close the \vc part as quickly as possible, Bob may initiate Protocol~\ref{alg:uni-AB} to close it unilaterally. According to Lemma~\ref{lem:subgame1}, Ingrid will not lose any coins, assuming Bob is a rational party. Second, Ingrid may gain knowledge about \vc either by receiving Alice's closing request or by closing the Alice-Ingrid part of \vc. If the Ingrid-Bob part has not been closed by this point, the closing game will shift to an EFG with perfect information, as depicted in Fig.~\ref{fig:bclosepE}. 
\end{proof}

\begin{lemma}
\label{lem:perfect}
    For the closing game $\mathcal{G}^{knowledge}_{B}$, which is started by Bob when Ingrid has knowledge about \vc, the joint SPNE strategy is $\{\sigma_B(\bot)=New, \sigma_I(New)\in \{Agree,Ignore\}\}$.

\end{lemma}

\begin{proof}
In the closing game $\mathcal{G}^{knowledge}_B$, Ingrid possesses knowledge about \vc, making the game an EFG with perfect information. As shown in Fig.~\ref{fig:bclosepE}, for the subgame $\mathcal{\varphi}((Old))$ where Bob takes the strategy $Old$, the SPNE strategy for Ingrid is $\sigma_I(Old) = \{Ignore\}$, as it is the only strategy that prevents Ingrid from incurring a loss. On the other hand, in the subgame $\mathcal{\varphi}((New))$ where Bob chooses the strategy $New$, Ingrid has two SPNE strategies, $\sigma_I(New) \in \{Agree, Ignore\}$. Since Bob can only maximize his utility when both parties agree, with $u((New, Agree)) = (\alpha, \alpha)$, the SPNE strategy for Bob in the game $\mathcal{G}^{knowledge}_B$ is $New$. 
\end{proof}

Additionally, we discuss a special situation where both Alice and Bob are offline, requiring Ingrid to run Protocol~\ref{alg:uni-AB} to unilaterally close both the Alice-Ingrid and Ingrid-Bob parts of \vc. According to Lemma~\ref{lem:subgame1}, Ingrid will not be cheated in this scenario. Furthermore, Ingrid is also unable to cheat Alice or Bob. The cross-checking scheme continues to ensure the security of \vc in a game-theoretic setting, as demonstrated in the following Lemma:

\begin{lemma}
\label{lem:rationalupdate}
    The cross-checking scheme guarantees rational main parties (e.g., Alice and Bob) will honestly follow the update protocol.
\end{lemma}

\begin{proof}
    According to Lemma~\ref{lem:secupdate}, both parts of \vc can only be closed using the same state, enforced by the cross-checking scheme. In this case, there is no incentive for two rational parties to collude and attempt to cheat the other party by deviating from the update protocol and trying to close both parts of \vc with different states.
\end{proof}

Consequently, with all the conclusions above, we can prove the game-theoretic security of \vc:

\Thmgamesec*

\begin{proof}
    
    To prove the game-theoretic security as defined in Definition~\ref{def:Game-theoretic security}, we need to show that, in the utility function of the SPNE strategy for the \vc closing game, each main party's utility is no less than $\alpha - \epsilon$.

    To start with, according to Lemma~\ref{lem:subgame1}, the SPNE strategy for each warden is $\sigma_W(\bot) = P_l$, which results in the utility for the unilateral closing party $P$ and their counterparty being $(\alpha, \alpha)$. Considering the cost of unilateral closing, the final SPNE utility of \emph{subgame 1} for the main parties is $(u_{closing}, u_{counter}) = (\alpha - \epsilon, \alpha)$. Applying this utility to games $\mathcal{G}^{no_knowledge}_B$ and $\mathcal{G}^{knowledge}_B$, we derive the final utility function for the \vc closing game. According to Lemma~\ref{lem:imperfect}, the SPNE strategy is the joint strategy $(\sigma_B,\sigma_I)=(Old, Ignore)$, which leads final utility $(u_B,u_I)=(\alpha-\epsilon,\alpha)$. According to Lemma~\ref{lem:perfect}, the SPNE strategy is the joint strategy $(\sigma_B,\sigma_I)=(New, \{Agree,Ignore\})$, which leads final utility $(u_B,u_I)\in \{(\alpha,\alpha),(\alpha-\epsilon,\alpha)\}$. Furthermore, Lemma~\ref{lem:rationalupdate} demonstrates that rational parties will not deviate from the update protocol or attempt to close both parts of \vc with different states. Therefore, the \vc closing protocol is game-theoretically secure.
\end{proof}

\begin{figure}
    \centering
    \includegraphics[width=\linewidth]{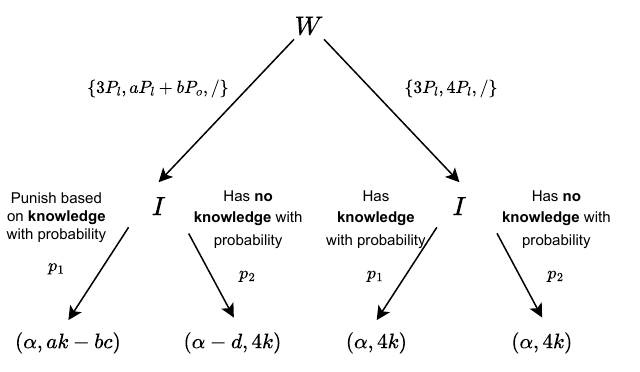}
    \caption{Unilateral closing game started by Ingrid. From the warden's perspective, Ingrid may have knowledge about \vc with possibility $p_1$, and have no knowledge with possibility $p_2$ }
    \label{fig:distinguish}
\end{figure}

\section{Security analysis for multi-hop \vc}\label{sec:multi-analysis}

\subsection{Security under Byzantine model}\label{subsec:multi-byz}

In this section, we show that multi-hop \vc is secure in the Byzantine model, i.e., satisfies balance security and liveness, similar to two-hop \vc. 

\begin{theorem}
\label{thm:multihopsecurity}
    Multi-hop \vc achieves \textbf{balance security} for honest parties under asynchrony, assuming at most $f$ Byzantine wardens in each committee.
\end{theorem}

\begin{proof}
    During the opening procedure of \vc, the initial state and warden identities will be correctly set up as long as there is at least one honest party among all four involved. Consequently, no Byzantine party can unilaterally close the multi-hop \vc under different conditions that have not been agreed upon by the honest party.
    
    For the optimistic closing, where all parties are responsive, assume Carol is the only honest party. In this case, Carol will only agree to update requests from Bob and Dave if they are correct and consistent. Therefore, the honest party will not lose any coins by updating the underlying TPC, even if both counterparties are Byzantine.
    
    The pessimistic closing remains secure, as demonstrated in the proof of Theorem~\ref{thm:security}, and also holds under Byzantine collusion. If Carol is the only honest party, the only way for Alice, Bob, and Dave to collude and attempt to cheat Carol is by closing the Bob-Carol and Carol-Dave parts differently through unilateral closing. However, the cross-checking scheme ensures that all parts of the multi-hop \vc can only be closed with the same state, thus preventing incorrect closings from occurring.
\end{proof}

\begin{theorem}
\label{thm:multihopliveness}
    Multi-hop \vc achieves \textbf{liveness} for honest parties under asynchrony, assuming at most $f$ Byzantine wardens in each committee.
\end{theorem}

\begin{proof}
    We prove liveness from two aspects: the liveness of the update request and the liveness of the close request. To hinder the execution of any valid update request in a multi-hop \vc, at least $f+1$ wardens from the update party's committee would need to ignore the signing request. However, our security model assumes that there are at most $f$ Byzantine wardens in each committee. Therefore, the liveness of update requests is guaranteed for multi-hop \vc.

    For the liveness of the closing, we consider two cases: First, if all parties are honest and responsive, they will agree on a correct collaborative closing request. In the second case, if at least one main party is malicious, the liveness is ensured by the pessimistic closing protocol. In this scenario, the closing request will eventually be committed, as there will always be wardens who publish the states they have stored on-chain. Thus, \vc can always be closed, even if some parties are offline or misbehave, ensuring that the liveness of closing is upheld.
\end{proof}

\subsection{Security under game-theoretic model}\label{subsec:multi-game}

We prove that our multi-hop \vc is still secure under the game-theoretic model by showing that the possible closing game in multi-hop \vc can be reduced to the two-hop \vc case. 

\begin{theorem}
    The multi-hop \vc is game-theoretic secure.
\end{theorem}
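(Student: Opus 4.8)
The plan is to prove game-theoretic security by \emph{reduction}: I would show that every closing game an honest party can face in the multi-hop setting is isomorphic to one of the three-party closing games already solved, so that the SPNE utilities carry over and Definition~\ref{def:Game-theoretic security} is inherited directly. Since the model guarantees at least one honest main party $P$, I would fix $P$ and analyze the protocol from its perspective, distinguishing two cases by $P$'s position on the path. If $P$ is an endpoint (an analogue of Alice or Bob), its closing interaction is with a single neighbor and its own committee, which is exactly the game of Figs.~\ref{fig:bclose}/\ref{fig:bclosep}. If $P$ is an intermediary (an analogue of Ingrid), it faces one neighbor per adjacent channel and must close both underlying TPCs; here I would collapse everything further down the path than $P$'s two neighbors into the (worst-case, fully colluding) behavior of those two neighbors, so that $P$'s local game is precisely the three-party Ingrid game. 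Phrased as an induction on the number of intermediaries, the base case is Theorem~\ref{thm:final} and each inductive step contracts an honest endpoint with its neighbor into an equivalent two-party interface, strictly shrinking the path.

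First I would establish that the unilateral-closing subgame is literally unchanged. The multi-hop update still requires a quorum certificate ($2f+1$ signatures) from every committee, and pessimistic closing still requires $2f+1$ publishments from the relevant committee (protocol changes~(2)--(3)). Hence at most $f$ wardens per committee are stale and there are at least $f+1$ proofs-of-fraud available per channel, so the collateral condition $(f+1)c>v$ of Theorem~\ref{thm:subgame1} holds verbatim. Because the wardens of each committee deposit in a distinct payment-channel smart contract, the subgames attached to different channels are independent (as noted after Theorem~\ref{thm:subgame1}), so $\{\sigma_P(\bot)=\bot,\sigma_W(\bot)=P_l\}$ remains the SPNE of each and no rational warden publishes a stale state.

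Next I would treat the intermediary case, which is the substantive step. For each of its two channels, party $P$ plays a copy of the Ingrid closing game, and I would invoke Theorems~\ref{thm:distinguish}, \ref{thm:imperfect}, and~\ref{thm:perfect} to conclude that in each copy $P$'s SPNE utility is at least $\alpha-\epsilon$, regardless of whether $P$ currently knows the \vc state. The point that genuinely distinguishes multi-hop from three-party is that an entire coalition of the other parties may now collude against $P$. I would argue this does not enlarge the adversary's power beyond the two-party collusion already covered, because the cross-committee state-uniqueness invariant---every accepted update carries a quorum certificate from \emph{every} committee on the path---forbids the coalition from committing two different ``latest'' states. Consequently any attempt to close $P$'s two channels in inconsistent states is detectable as fraud on at least one side, and the $(f+1)c>v$ collateral makes such a deviation unprofitable, exactly as in Theorem~\ref{thm:perfect}; this preserves $P$'s balance, since its receipts on one channel match its payments on the other.

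Finally I would compose the pieces: as the per-channel games are independent and in each the honest party secures utility at least $\alpha-\epsilon$, the same bound holds for $P$'s total closing utility, which is precisely Definition~\ref{def:Game-theoretic security}, and the argument of Theorem~\ref{thm:final} then applies. The \textbf{main obstacle} I anticipate is the collusion step of the third paragraph: formally ruling out that a long coalition surrounding an honest intermediary engineers inconsistent but individually ``valid-looking'' closing states. I expect the cleanest way to discharge it is to make the quorum-certificate-per-committee invariant a lemma and show it implies a unique committable \vc state across the whole path, so that the reduction to the already-solved three-party game is sound rather than merely heuristic.
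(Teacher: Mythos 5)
Your proposal takes essentially the same route as the paper: both reduce every multi-hop closing interaction to a pairwise game between the two counterparties of a single underlying payment channel (arguing that the other path participants add no strategies and change no utilities), both observe that the quorum-certificate-per-committee update rule and the $2f+1$-publishment closing rule leave the unilateral-closing subgame unchanged so that Theorem~\ref{thm:subgame1} and Theorems~\ref{thm:distinguish}--\ref{thm:perfect} apply verbatim, and both conclude by composing the independent per-channel games. The only substantive difference is that the paper's proof additionally argues that rational parties will not misconfigure the locked amounts during opening, which you omit; since Definition~\ref{def:Game-theoretic security} concerns only the closing game, this does not affect the correctness of your argument.
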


\begin{proof}
    To prove that the multi-hop \vc is secure, we address each phase separately. As in the two-hop case, the security of the update phase is merged with the unilateral closing phase.
    
    First, we show by contradiction that opening the multi-hop \vc remains game-theoretically secure. Consider a multi-hop \vc involving four parties: Alice, Bob, Carol, and Dave, as depicted in Fig.~\ref{fig:multihop}. Suppose Bob and Carol collude to open their channel with a different state than what was agreed upon with Alice and Dave. Note that Bob and Carol can only incorrectly update their own payment channel. Assume the agreed-upon state is $(a, b)$, but they instead lock the state as $(a, b' < b)$. If, in the final state, Dave must pay Alice $c$ coins, and $b' < c < b$, Bob would still be required to pay Alice $c$ coins in the underlying channel. If Carol cannot pay accordingly (due to locking fewer coins), Bob becomes the party who loses coins. Thus, a utility-maximizing Bob (or Carol, symmetrically) would not open their channels in inconsistent states. We conclude that rational parties will correctly open the multi-hop \vc.

    To show the security of the closing game, we first prove that the unilateral closing subgame is secure. The update protocol and the unilateral closing game remain the same as in the two-hop case, requiring $2f+1$ wardens from the local committee to publish the state. Additionally, wardens cannot determine whether intermediary parties (e.g., Bob and Carol) know the \vc state, as collusion between the intermediary and the main \vc parties (like Alice and Dave) is always possible -- similar to the two-hop \vc scenario. Therefore, the security of the unilateral closing subgame is inherited from the security analysis of the two-hop \vc presented earlier.

    To complete the reduction of the multi-hop \vc to the two-hop version, we must demonstrate that the entire closing game is secure. The key observation is that when one party, such as Dave, wishes to close the multi-hop \vc, only two participants are involved in the closing game: Dave and Carol, despite the involvement of several intermediaries. This is because Dave ultimately wants to modify the underlying payment channel with Carol, which can only be achieved with Carol's consent or with the help of the wardens. As a result, the presence of other parties does not affect the game tree or change the utility outcomes. Given this and the security of the unilateral closing subgame, we conclude that the closing game shown in Fig.~\ref{fig:bclose} and~\ref{fig:bclosep} also reflects the closing game of multi-hop \vc. Additionally, the cross-checking scheme remains effective for multi-hop \vc and ensures that all parts of the \vc are closed with the same state, preventing rational parties from colluding to close \vc with different states. Thus, the security of closing a multi-hop \vc is directly inherited from the security analysis in Section~\ref{sec:gameproof}. 
\end{proof}

\end{document}